\newtheorem{theorem}{Theorem}[section]
\newtheorem{lemma}{Lemma}[section]
\theoremstyle{remark}
\newtheorem{remark}[lemma]{Remark}
\title{On data depth in infinite dimensional spaces}
\author{\vspace{0.3in} Anirvan Chakraborty\thanks{Research of A. Chakraborty is partially supported by CSIR SPM Fellowship} ~and Probal Chaudhuri}
\date{}
\begin{document}

\maketitle
\vspace{-0.6in}
\begin{center}
 Theoretical Statistics and Mathematics Unit, \\
 Indian Statistical Institute \\ 
 203, B. T. Road, Kolkata - 700108, INDIA. \\
 emails: anirvan\_r@isical.ac.in, probal@isical.ac.in
\end{center}
\vspace{0.15in}
\begin{abstract}
The concept of data depth leads to a center-outward ordering of multivariate data, and it has been effectively used for developing various data analytic tools.  While different notions of depth were originally developed for finite dimensional data, there have been some recent attempts to develop depth functions for data in infinite dimensional spaces. In this paper, we consider some notions of depth in infinite dimensional spaces and study their properties under various stochastic models. Our analysis shows that some of the depth functions available in the literature have degenerate behaviour for some commonly used probability distributions in infinite dimensional spaces of sequences and functions. As a consequence, they are not very useful for the analysis of data satisfying such infinite dimensional probability models. However, some modified versions of those depth functions as well as an infinite dimensional extension of the spatial depth do not suffer from such degeneracy, and can be conveniently used 
for analyzing infinite dimensional data. 
\vspace{0.1in} \\
\textbf{Keywords}: {$\alpha$-mixing sequences, band depth, fractional Brownian motions, geometric Brownian motions, half-region depth, half-space depth, integrated data depth, projection depth, spatial depth}
\end{abstract}

\section{Introduction} 
\label{intro} \indent In finite dimensional spaces, depth functions provide a center-outward ordering of the points in the sample space relative to a given probability distribution, and various depth functions for probability distributions in $\mathbb{R}^{d}$ have been proposed in the literature (see, e.g.,  \cite{LPS99} and \cite{ZS00a} for some extensive review). Several desirable properties of depth functions have been listed in \cite{ZS00a}, and these properties have been utilized in developing several statistical procedures. Depth-weighted L-type location estimators like trimmed means have been considered in \cite{DG92}, \cite{FM01}, \cite{Mosl02} and \cite{Zuo06}. Depth functions have also been used to construct statistical classifiers (see, e.g., \cite{Jorn04}, \cite{GC05a}, \cite{MH06}, \cite{DG12} and \cite{LCAL12}). Another useful application of depths is in constructing depth contours (see, e.g., \cite{DG92} and \cite{Mosl02}), which determine central and outlying regions of a probability 
distribution. These contours and regions are useful in outlier detection.  \\
\indent With the recent advancement of scientific techniques and measurement devices, we increasingly come across data that have dimensions much larger than the sample sizes. Such data cannot be handled using standard multivariate techniques due to their high dimensionalities and low sample sizes. A common approach for handling such data is to embed them into suitable infinite dimensional spaces (e.g., data lying in function spaces). Half-space depth (HD) (see, e.g., \cite{DG92}), projection depth (PD) (see, e.g., \cite{ZS00a}) and spatial depth (SD) (see, e.g., \cite{VZ00} and \cite{Serf02}), which were originally defined for data in finite dimensional spaces, can have natural extensions into infinite dimensional spaces as we shall see in subsequent sections. \\
\indent \cite{FM01} defined a notion of depth, which is called integrated data depth (ID), in function spaces. \cite{FM01} used this depth function to construct trimmed means, and they showed that the empirical ID is a strongly and uniformly consistent estimator of its population counterpart. These authors used ID to categorize extremal and central curves in the data consisting of $100$ curves used to build the NASDAQ $100$ index. Recently, \cite{LPR09, LPR11} introduced two different notions of data depth for functional data, and they called them band depth (BD) and half-region depth (HRD). These authors have used these depth functions for detecting the central and the peripheral sample curves of some real datasets including daily temperature curves for Canadian weather stations and gene expression data for lymphoblastic leukemia. Trimmed means based on BD have been discussed in \cite{LPR06}, and they used it to construct classifiers based on certain distance measures. The distance of an observation from a 
class was defined either as the distance from the trimmed mean of the class or as a trimmed weighted average of the distances from observations in the class. The procedure was implemented to classify the well-known Berkeley growth data (see \cite{RS05}). \cite{LPR09} also proposed a rank based test for two-population problems using BD, and they used the procedure to test the equality of curves obtained by plotting relative diameters along the y-axis against relative heights along the x-axis for two groups of trees as well as the Berkeley growth data. These authors proved that the empirical versions of both of BD and HRD converge uniformly almost surely to their population counterparts. However, it was observed by them that both the depth functions tend to take small values if the sample consists of irregular (non-smooth) curves that cross one another often. To overcome this problem, \cite{LPR09, LPR11} proposed modified versions of these depth functions, called modified band depth (MBD) and modified half-
region depth (MHRD), using the ``proportion of time'' a sample curve spends inside a band or a half-region, respectively.  \\
\indent It was observed in \cite{Liu90} that the maximum value of simplicial depth of a point in $\mathbb{R}^{d}$ with respect to any angularly symmetric absolutely continuous probability distribution is $2^{-d}$. Consequently, the simplicial depth of any point in $\mathbb{R}^{d}$ under such a distribution converges to zero as $d$ grows to infinity. This observation motivated us to critically investigate the behaviour of the above-mentioned depth functions for some standard probability models that are widely used for data in infinite dimensional spaces. It will be shown that infinite dimensional extensions of HD and PD have degenerate behaviour in infinite dimensional spaces. Moreover, both BD and HRD suffer from degeneracy for some standard probability models in function spaces. However, their modified versions as well as ID do not suffer from any such degenerate behaviour for similar probability distributions in function spaces. We also extend the notion of SD into infinite dimensional spaces, and it is 
shown that such an extension leads to a well-behaved and statistically useful depth function for many infinite dimensional probability distributions.

\section{Depths using linear projections} 
\label{sec:1} In this section, we shall consider depth functions that are defined using linear projections of a random element ${\bf X}$. We begin by recalling that in finite dimensional spaces, the definitions of both of HD and PD involve distributions of linear projections of ${\bf X}$. An extension of HD into Banach spaces has been considered in \cite{DGC11}. Consider a Banach space ${\cal X}$, the associated Borel $\sigma$-field, a random element ${\bf X} \in {\cal X}$ and a fixed point ${\bf x} \in {\cal X}$. The HD of ${\bf x}$ with respect to the distribution of ${\bf X}$ is defined as $HD({\bf x}) = \inf\{P({\bf u}({\bf X} - {\bf x}) \geq 0) : {\bf u} \in {\cal X}^{*}\}$, where ${\cal X}^{*}$ denotes the dual space of ${\cal X}$. The PD of ${\bf x}$ with respect to the distribution of ${\bf X}$ is defined as
\begin{eqnarray*}
 PD({\bf x}) = \left[1 + \sup_{{\bf u} \in {\cal X}^{*}} \frac{|{\bf u}({\bf x}) - \theta({\bf u}({\bf X}))|}{\sigma({\bf u}({\bf X}))} \right]^{-1}, 
\end{eqnarray*}
where $\theta(.)$ and $\sigma(.)$ are some measures of location and scatter of the distribution of ${\bf u}({\bf X})$. \\
\indent If ${\cal X}$ is a separable Hilbert space, ${\cal X}$ is isometrically isomorphic to $l_{2}$, the space of all square summable sequences. In that case, ${\cal X} = {\cal X}^{*} = l_{2}$, and ${\bf u}({\bf X})$ and ${\bf u}({\bf x})$ in the definitions of HD and PD given above are same as $\langle{\bf u},{\bf X}\rangle$ and $\langle{\bf u},{\bf x}\rangle$, respectively. Here $\langle.,.\rangle$ denotes the usual inner product in $l_{2}$. We shall first consider the space $l_{2}$ equipped with its usual norm and the associated Borel $\sigma$-field. Consider a random sequence ${\bf X} = (X_{1},X_{2},\ldots) \in l_{2}$ such that $\sum_{k=1}^{\infty} E(X_{k}^{2}) < \infty$, which implies $E({\bf X}) = (E(X_{1}),E(X_{2}),\ldots) \in l_{2}$. Let us set $Y_{1} = X_{1} - E(X_{1})$, and denote by $Y_{k}$ the residual of linear regression of $X_{k}$ on $(X_{1}, X_{2}, \ldots,  X_{k-1})$ for $k \geq 2$. In other words, for $k \geq 2$, $Y_{k} = X_{k} - \beta_{0k} - \sum_{j=1}^{k-1} \beta_{jk}X_{j}$, where $\beta_{0k} + \sum_{j=1}^{k-1} \beta_{jk}X_{j}$ is the linear regression of $X_{k}$ on $(X_{1},X_{2},\ldots,X_{k-1})$. Thus, ${\bf Y} = (Y_{1},Y_{2},\ldots)$ is a sequence of uncorrelated random variables with zero means. Further, since $\tau_{k}^{2} = E(Y_{k}^{2}) \leq E(X_{k}^{2})$ for all $k \geq 1$, we have $\sum_{k=1}^{\infty} \tau_{k}^{2} < \infty$, and hence, ${\bf Y} \in l_{2}$ with probability one. We now state a theorem that establishes a degeneracy result for both of HD and PD under appropriate conditions on the distribution of ${\bf Y}$.
\vspace{-0.05in}
\begin{theorem} \label{thm2.1}
Let $\mu$ denote the probability distribution of ${\bf X}$ in $l_{2}$. Assume that the residual sequence ${\bf Y}$ obtained from ${\bf X}$ is $\alpha$-mixing with the mixing coefficients $\{\alpha_{k}\}$ satisfying $\sum_{k=1}^{\infty} \alpha_{k}^{1-1/2p} < \infty$ for some $p \geq 1$. Further, assume that $\tau_{k} > 0$ for all $k \geq 1$, and $\sup_{k \geq 1} E\{(Y_{k}/\tau_{k})^{2r}\} < \infty$ for some $r > p$. Then, $HD({\bf x}) = PD({\bf x}) = 0$ for all ${\bf x}$ in a subset of $l_{2}$ with $\mu$-measure one. Here $HD({\bf x})$ and $PD({\bf x})$ denote the half-space and the projection depths of ${\bf x}$ with respect to $\mu$, respectively, and in the definition of $PD({\bf x})$, we choose $\theta(.)$ and $\sigma(.)$ to be the mean and the standard deviation, respectively. 
\end{theorem}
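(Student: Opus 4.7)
The plan is to exhibit, for each $\mathbf{x}$ in a set of full $\mu$-measure, a sequence of linear functionals $\{\mathbf{u}_{n}\}\subset l_{2}^{*}$ along which both $P(\mathbf{u}_{n}(\mathbf{X}-\mathbf{x})\geq 0)\to 0$ and $|\mathbf{u}_{n}(\mathbf{x})-E(\mathbf{u}_{n}(\mathbf{X}))|/\sigma(\mathbf{u}_{n}(\mathbf{X}))\to \infty$. The key reparameterization is to pass from linear functionals of $\mathbf{X}$ to linear combinations of the uncorrelated residuals $Y_{k}$. For any $\mathbf{z}\in l_{2}$, let $\tilde{z}_{k}$ denote its residual computed with the population regression coefficients, so that $\tilde{z}_{k}$ reduces to $Y_{k}$ when $\mathbf{z}=\mathbf{X}$. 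Each map $\mathbf{z}\mapsto \sum_{k=1}^{n} v_{k}\tilde{z}_{k}$ is affine in $\mathbf{z}$, and its linear part---being a finite linear combination of the coordinate functionals---lies in $l_{2}^{*}$.

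Given $\mathbf{x}$, I would take $v_{k}=\tilde{y}_{k}/\tau_{k}^{2}$ and let $\mathbf{u}_{n}$ be the linear part of $\mathbf{z}\mapsto \sum_{k=1}^{n}v_{k}\tilde{z}_{k}$. Setting $s_{n}^{2}=\sum_{k=1}^{n}\tilde{y}_{k}^{2}/\tau_{k}^{2}$, a short computation using only $E(Y_{k})=0$ and the uncorrelatedness of the $Y_{k}$'s gives
\begin{equation*}
\mathbf{u}_{n}(\mathbf{x})-E(\mathbf{u}_{n}(\mathbf{X}))=s_{n}^{2},\qquad \sigma(\mathbf{u}_{n}(\mathbf{X}))^{2}=\sum_{k=1}^{n}v_{k}^{2}\tau_{k}^{2}=s_{n}^{2}.
\end{equation*}
Thus the PD-supremand at $\mathbf{u}_{n}$ equals $s_{n}$, while Chebyshev's inequality yields
\begin{equation*}
P(\mathbf{u}_{n}(\mathbf{X}-\mathbf{x})\geq 0)=P(L_{n}\geq s_{n}^{2})\leq s_{n}^{-2},
\end{equation*}
where $L_{n}=\sum_{k=1}^{n}v_{k}Y_{k}$ has mean $0$ and variance $s_{n}^{2}$. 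Both desired conclusions therefore reduce to the single statement that $s_{n}\to \infty$ for $\mu$-a.e.\ $\mathbf{x}$.

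The main obstacle is precisely this last step. Under $\mu$, the deterministic residuals $\tilde{y}_{k}$ are distributed as $Y_{k}$, so what is needed is $\sum_{k}Y_{k}^{2}/\tau_{k}^{2}=\infty$ almost surely. Writing $W_{k}=Y_{k}^{2}/\tau_{k}^{2}-1$, the $W_{k}$'s are centered, inherit $\alpha$-mixing from $\mathbf{Y}$ with the same coefficients $\{\alpha_{k}\}$ (being measurable functions of $Y_{k}$), and possess uniformly bounded $r$-th absolute moments under the hypothesis $\sup_{k}E(Y_{k}/\tau_{k})^{2r}<\infty$. The exponents $r>p$ and the summability $\sum_{k}\alpha_{k}^{1-1/(2p)}<\infty$ are tuned so that a strong law of large numbers for non-stationary $\alpha$-mixing sequences applies to $\{W_{k}\}$, yielding $n^{-1}\sum_{k=1}^{n}W_{k}\to 0$ almost surely and hence $\sum_{k=1}^{n}Y_{k}^{2}/\tau_{k}^{2}=n+o(n)\to \infty$ almost surely. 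This identifies a $\mu$-full-measure set on which $s_{n}\to \infty$, simultaneously forcing $HD(\mathbf{x})=0$ and $PD(\mathbf{x})=0$.
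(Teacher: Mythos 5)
Your proposal is correct and takes essentially the same route as the paper: both arguments reduce $HD$ and $PD$ to the almost sure divergence of $\sum_{k}Y_{k}^{2}/\tau_{k}^{2}$ by testing against finite linear combinations of the uncorrelated residuals (your explicit choice $v_{k}=\tilde{y}_{k}/\tau_{k}^{2}$ is exactly the Cauchy--Schwarz optimizer the paper extracts from its infimum over ${\bf v}$), with Chebyshev's inequality handling $HD$ and the defining ratio handling $PD$. The final step you describe is carried out in the paper via Corollary 4 of Hansen (1991), the strong law for heterogeneous $\alpha$-mixing sequences, applied to $Y_{k}^{2}/\tau_{k}^{2}$.
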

\vspace{-0.05in}
\indent It is obvious that for any Gaussian probability measure $\mu$, the assumptions in the preceding theorem hold. Recently, it has been shown in \cite{DGC11} that HD has degenerate behaviour when the probability distribution of ${\bf X} = (X_{1},X_{2},\ldots)$ is such that $X_{1}, X_{2}, \ldots$ are independent random variables satisfying suitable moment conditions. Note that if ${\bf X} = (X_{1},X_{2},\ldots)$ is a sequence of independent random variables with zero means, we have ${\bf Y} = {\bf X}$. In that case, if we choose $p = 1$ and $r = 2$, the moment assumption in the above theorem implies that $\sum_{k=1}^{\infty} E\{(X_{k}/\sigma_{k})^4\}/k^{2} < \infty$, which is the condition assumed in Theorem 3 in \cite{DGC11}. It is worth mentioning here that the above result is actually true whenever $\sum_{k=1}^{\infty} (Y_{k}/\tau_{k})^{2} = \infty$ with probability one (see the proof in Section \ref{sec:5}). This, for instance, holds whenever ${\bf Y}$ is a sequence of independent non-degenerate random variables. The moment and the mixing assumptions on ${\bf Y}$ stated in the theorem are only sufficient to ensure $\sum_{k=1}^{\infty} (Y_{k}/\tau_{k})^{2} = \infty$ with probability one, but by no means they are necessary.  \\
\indent The degeneracy of HD and PD stated in the previous theorem is not restricted to separable Hilbert spaces only. Let us consider the space $C[0,1]$ of continuous functions defined on $[0,1]$ along with its supremum norm and the associated Borel $\sigma$-field. Recall that the dual space of $C[0,1]$ is the space of finite signed Borel measures on $[0,1]$ equipped with its total variation norm. The following result shows the degeneracy of HD and PD for a class of probability measures in $C[0,1]$.
\vspace{-0.05in}
\begin{theorem} \label{thm2.2}
Consider a random element ${\bf X}$ in $C[0,1]$ having a Gaussian distribution with a positive definite covariance kernel, and let $\mu$ denote the distribution of ${\bf X}$. Then, $HD({\bf x}) = PD({\bf x}) = 0$ for all ${\bf x}$ in a subset of $C[0,1]$ with $\mu$-measure one. Here we denote the half-space and the projection depths of ${\bf x}$ with respect to $\mu$ by $HD({\bf x})$ and $PD({\bf x})$, respectively, and we choose $\theta(.)$ as the mean and $\sigma(.)$ as the standard deviation in the definition of $PD({\bf x})$. 
\end{theorem}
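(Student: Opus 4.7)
My plan is to reduce the assertion to a coordinate-wise divergence statement in the Karhunen-Lo\`{e}ve (KL) basis, in the same spirit as Theorem~\ref{thm2.1}. Specifically, I would exhibit, for $\mu$-almost every $\mathbf{x}$, a sequence of linear functionals $\mathbf{v}_n \in C[0,1]^{*}$ along which the standardized distance between $\mathbf{v}_n(\mathbf{x})$ and the distribution of $\mathbf{v}_n(\mathbf{X})$ blows up; this simultaneously forces $PD(\mathbf{x})=0$ and $HD(\mathbf{x})=0$.

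First I would invoke Mercer's theorem. Since $\mathbf{X}$ is Gaussian in $C[0,1]$, its covariance kernel $K(s,t)$ is continuous on $[0,1]^{2}$, and, being positive definite, it admits the decomposition $K(s,t)=\sum_{k\geq 1}\lambda_k\phi_k(s)\phi_k(t)$ with $\lambda_k>0$ for every $k$ and $\{\phi_k\}$ an orthonormal basis of $L^{2}[0,1]$ consisting of continuous functions. Writing $m=E(\mathbf{X})$, the random coefficients $\xi_k=\lambda_k^{-1/2}\int_0^1(X(t)-m(t))\phi_k(t)\,dt$ are jointly Gaussian, uncorrelated, and of unit variance, hence i.i.d.\ $N(0,1)$; in particular $\sum_k \xi_k^{2}=\infty$ almost surely. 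For a generic $\mathbf{x}\in C[0,1]$, set $a_k(\mathbf{x})=\int_0^1(\mathbf{x}(t)-m(t))\phi_k(t)\,dt$. Since $a_k(\mathbf{X})=\sqrt{\lambda_k}\,\xi_k$ under $\mu$, Fubini combined with the monotone convergence theorem yields
\begin{equation*}
S_n(\mathbf{x}):=\sum_{k=1}^{n}\frac{a_k(\mathbf{x})^{2}}{\lambda_k}\longrightarrow\infty\quad\text{for $\mu$-almost every }\mathbf{x}.
\end{equation*}

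Next, fix such an $\mathbf{x}$ and, with $c_k=a_k(\mathbf{x})/\lambda_k$, define the signed Borel measure $\mathbf{v}_n$ on $[0,1]$ by $d\mathbf{v}_n(t)=\sum_{k=1}^{n}c_k\phi_k(t)\,dt$. Each $\phi_k$ is continuous on $[0,1]$, so $\mathbf{v}_n$ has finite total variation and belongs to $C[0,1]^{*}$. The random variable $\mathbf{v}_n(\mathbf{X})$ is Gaussian with mean $\sum_k c_k\langle m,\phi_k\rangle$ and, by orthonormality, variance $\sum_k c_k^{2}\lambda_k$, and a direct computation gives
\begin{equation*}
\frac{\mathbf{v}_n(\mathbf{x})-E\{\mathbf{v}_n(\mathbf{X})\}}{\sigma(\mathbf{v}_n(\mathbf{X}))}=\frac{\sum_{k=1}^{n}c_k\,a_k(\mathbf{x})}{\sqrt{\sum_{k=1}^{n}c_k^{2}\lambda_k}}=\sqrt{S_n(\mathbf{x})}.
\end{equation*}
Since $S_n(\mathbf{x})\to\infty$, the supremum in the definition of $PD$ is infinite, so $PD(\mathbf{x})=0$; and by Gaussianity of $\mathbf{v}_n(\mathbf{X})$, $P(\mathbf{v}_n(\mathbf{X})\geq\mathbf{v}_n(\mathbf{x}))=\Phi(-\sqrt{S_n(\mathbf{x})})\to 0$, so $HD(\mathbf{x})=0$ as well.

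The main technical point to watch is twofold: that the functionals $\mathbf{v}_n$ truly lie in $C[0,1]^{*}$, which is where continuity of the eigenfunctions supplied by Mercer's theorem (via continuity of $K$) is essential; and that positive definiteness of $K$ is needed precisely to ensure $\lambda_k>0$ for every $k$, so that $c_k=a_k(\mathbf{x})/\lambda_k$ is always well defined and $\{\phi_k\}$ genuinely spans $L^{2}[0,1]$. Everything beyond these structural facts is a routine coordinate computation in the KL basis, and the mechanism parallels the Hilbert-space argument underlying Theorem~\ref{thm2.1}.
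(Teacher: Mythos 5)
Your argument is correct, but it takes a genuinely different route from the paper. The paper's proof of Theorem~\ref{thm2.2} never leaves the realm of finite-dimensional marginals: it evaluates the process at the grid points $k/d$, uses functionals ${\bf u}_{d}$ that are finite sums of point masses (trivially in ${\cal M}[0,1]$), orthogonalizes the vector $(X_{1/d},\ldots,X_{d/d})$ by sequential linear regression exactly as in Theorem~\ref{thm2.1}, and exploits the fact that the standardized residuals are i.i.d.\ $N(0,1)$ for each $d$ (so the relevant sum is $\chi^2_{d}$-distributed and its supremum over $d$ is a.s.\ infinite); positive definiteness of the kernel enters only to guarantee that the residual variances are nonzero. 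You instead diagonalize globally via Mercer and the Karhunen--Lo\`{e}ve expansion, and your dual elements are absolutely continuous measures with continuous densities $\sum_{k\le n}c_{k}\phi_{k}$. What your route buys is a cleaner monotone quantity: your $S_{n}({\bf x})$ is a genuine increasing sum in fixed coordinates $\xi_{k}$, whereas the paper's residuals $Y_{d,k}$ form a triangular array that changes with $d$ (the grids $\{k/d\}$ are not nested), so the a.s.\ divergence there needs the small extra observation about the $\chi^{2}_{d}$ tail. What the paper's route buys is economy of hypotheses: it needs nothing beyond joint Gaussianity of finite-dimensional distributions, while you must additionally justify that the covariance kernel is continuous on $[0,1]^{2}$ so that Mercer applies with continuous eigenfunctions. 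That continuity is true for a Gaussian random element of $C[0,1]$, but it is not free --- it requires square-integrability of $\sup_{t}|X_{t}|$ (Fernique's theorem or an equivalent), and you should say so explicitly rather than assert continuity of $K$ as automatic. With that one point flagged, your computation (variance $\sum_{k}c_{k}^{2}\lambda_{k}$, standardized projection $\sqrt{S_{n}({\bf x})}$, and the simultaneous conclusions $PD({\bf x})=0$ and $HD({\bf x})\le\Phi(-\sqrt{S_{n}({\bf x})})\to 0$) is sound.
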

\vspace{-0.05in}
\indent The degeneracy of HD stated in Theorems \ref{thm2.1} and \ref{thm2.2} can be interpreted as follows. Let ${\cal X}$ be either $l_{2}$ or $C[0,1]$. Then, for any ${\bf x} \in {\cal X}$, we can choose a hyperplane in ${\cal X}$ through ${\bf x}$ in such a way that the probability content of one of the half-spaces is as small as we want. On the other hand, the degeneracy result about PD in the above theorems implies that one can find an element ${\bf u} \in {\cal X}^{*}$ so that the distance of ${\bf u}({\bf x})$ from the mean of ${\bf u}({\bf X})$ relative to the standard deviation of ${\bf u}({\bf X})$ will be as large as desired. Such degenerate behaviour of HD and PD clearly implies that they are not suitable for center-outward ordering of the points in ${\cal X}$, and these depth functions cannot be used to determine the central and the outlying regions for many probability measures including Gaussian distributions in ${\cal X}$. One reason for such degeneracy is that the dual space ${\cal X}^{*}$ is too large, and its unit ball is not compact (see also \cite{MP12}). It will be appropriate to note here that unlike what we have mentioned about simplicial depth in the Introduction, it is easy to verify that the maximum values of HD and PD for any symmetric probability distribution in ${\cal X}$ such that any linear function has a continuous distribution are $1/2$ (see, e.g., \cite{DGC11}) and $1$, respectively, and these maximum values are achieved at the centre of symmetry of the probability distribution. In other words, although HD and PD have degenerate behaviour in ${\cal X}$, the half-space median and the  projection median remain well-defined for symmetric distributions in ${\cal X}$. \\
\indent Let us now consider a simple classification problem, which involves class distributions in ${\cal X}$ (${\cal X} = l_{2}$ or $C[0,1]$ as in the preceding paragraph), where the two classes differ only by a shift in the location. Let ${\bf X}$ and ${\bf Z}$ denote random elements from the two class distributions, where ${\bf Z}$ has the same distribution as ${\bf X} + {\bf c}$ for some fixed ${\bf c} \in {\cal X}$. Let us denote by $HD_{{\bf X}}$ and $HD_{{\bf Z}}$ the half-space depth functions computed using the distributions of ${\bf X}$ and ${\bf Z}$, respectively. Similarly, let $PD_{{\bf X}}$ and $PD_{{\bf Z}}$ be the projection depth functions based on the distributions of ${\bf X}$ and ${\bf Z}$, respectively. Then, under the assumptions of Theorem \ref{thm2.1} or \ref{thm2.2}, it is easy to verify using the arguments in the proofs of those theorems (see Section \ref{sec:5}) that $HD_{{\bf X}}({\bf w}) = HD_{{\bf Z}}({\bf w}) = PD_{{\bf X}}({\bf w}) = PD_{{\bf Z}}({\bf w}) = 0$  for almost every realization ${\bf w}$ of ${\bf X}$ and ${\bf Z}$. This implies that neither HD nor PD is suitable for classification purpose in the space ${\cal X}$ for such class distributions.   \\
\indent A modified version of Tukey depth, called the random Tukey depth (RTD), was proposed in \cite{CANR08} for probability distributions in $l_{2}$. It is defined as $RTD({\bf x}) = \min_{1 \leq j \leq N} \min\{P(\langle{\bf U}_{j},{\bf X}\rangle \leq \langle{\bf U}_{j},{\bf x}\rangle),P(\langle{\bf U}_{j},{\bf X}\rangle \geq \langle{\bf U}_{j},{\bf x}\rangle)\}$, where ${\bf U}_{j}$'s are $N$ i.i.d. observations from some probability distribution in $l_{2}$ independent of ${\bf X}$, and the probability in the definition of RTD is conditional on them. It is easy to see that the support of the distribution of $RTD(\widetilde{{\bf X}})$ is the whole of $[0,1/2]$ for Gaussian and many other distributions in $l_{2}$, where $\widetilde{{\bf X}}$ denotes an independent copy of ${\bf X}$.  However, \cite{CANR08} mentioned some theoretical and practical difficulties with RTD including the problem of choosing $N$ and the distribution of ${\bf U}_{j}$'s. A depth function for probability distributions in Banach 
spaces was introduced in \cite{CF09}, which is called Integrated dual depth (IDD). It is defined as $IDD({\bf x}) = \int_{{\cal X}^{*}} D_{{\bf u}}({\bf u}({\bf x}))Q(d{\bf u})$, where ${\bf x} \in {\cal X}$, $Q$ is a probability measure in ${\cal X}^{*}$, and $D_{{\bf u}}$ is a depth function defined on $\mathbb{R}$. \cite{CF09} recommended that one can choose a finite number of i.i.d. random elements ${\bf U}_{1}, {\bf U}_{2}, \ldots, {\bf U}_{N}$ from a probability distribution in ${\cal X}^{*}$, which will be independent of ${\bf X}$ and compute IDD using $N^{-1} \sum_{k=1}^{N} D_{{\bf U}_k}({\bf U}_{k}({\bf x}))$. It can be easily shown that if $D_{{\bf u}}$ is any standard depth function (e.g., HD, SD or simplicial depth) that maps $\mathbb{R}$ onto a non-degenerate interval, then for Gaussian and many other distributions of ${\bf X}$ in ${\cal X}$, $IDD({\bf X})$ will have a non-degenerate distribution with an appropriate interval as its support. However, like RTD, there are no natural guidelines 
available in practice for choosing the probability distribution $Q$ in the dual space ${\cal X}^{*}$ and the number $N$ of the random directions ${\bf U}_{j}$'s.

\section{Depths based on coordinate random variables} 
\label{sec:2} In this section, we shall discuss depths that use the underlying coordinate system of the sample space. We begin by considering BD and HRD that were discussed in the Introduction. BD and HRD of any ${\bf x} = \{x_{t}\}_{t \in [0,1]} \in C[0,1]$ with respect to the probability distribution of a random element ${\bf X} = \{X_{t}\}_{t \in [0,1]}$ in $C[0,1]$ are defined as 
\begin{eqnarray}
 BD({\bf x}) &=& \sum_{j=2}^{J} P\left(\min_{i=1,\ldots,j} X_{i,t} \leq x_{t} \leq \max_{i=1,\ldots,j} X_{i,t}, \ \forall \ t \in [0,1]\right) \ \ \mbox{and}   \label{s3.1}     \\
 HRD({\bf x}) &=& \min\{P(X_{t} \leq x_{t}, \ \forall \ t \in [0,1]), \ P(X_{t} \geq x_{t}, \ \forall \ t \in [0,1])\},  \label{s3.2}
\end{eqnarray}
respectively. Here ${\bf X}_{i} = \{X_{i,t}\}_{t \in [0,1]}$, $i=1,2,\ldots,J$, denote independent copies of ${\bf X}$. \cite{LPR09, LPR11} defined finite dimensional versions of these two depth functions as follows. For $J$ independent copies ${\bf X}_{i} = (X_{i,1},X_{i,2},\ldots,X_{i,d})$, $i=1,2,\ldots,J$, of ${\bf X} = (X_{1},X_{2},\ldots,X_{d})$ and a fixed ${\bf x} = (x_{1},x_{2},\ldots,x_{d})$, 
\begin{eqnarray*}
 BD({\bf x}) &=& \sum_{j=2}^{J} P\left(\min_{1 \leq i \leq j} X_{i,k} \leq x_{k} \leq \max_{1 \leq i \leq j} X_{i,k}, \ \forall \ k=1,2\ldots,d\right) \ \ \mbox{and}   \\
 HRD({\bf x}) &=& \min\{P(X_{k} \leq x_{k}, \ \forall \ \ k=1,2\ldots,d), \ P(X_{k} \geq x_{k}, \ \forall \ \ k=1,2\ldots,d)\},
\end{eqnarray*}
respectively. The above definitions of BD and HRD in function spaces and finite dimensional Euclidean spaces lead to a natural definition of these depth functions in a sequence space. For $J$ i.i.d. copies ${\bf X}_{i} = (X_{i,1},X_{i,2},\ldots)$ of an infinite random sequence ${\bf X} = (X_{1},X_{2},\ldots)$ and a fixed sequence ${\bf x} = (x_{1},x_{2},\ldots)$, we can define 
\begin{eqnarray*}
BD({\bf x}) &=& \sum_{j=2}^{J} P\left(\min_{1 \leq i \leq j} X_{i,k} \leq x_{k} \leq \max_{1 \leq i \leq j} X_{i,k}, \ \forall \ k \geq 1\right) \ \ \mbox{and} \\
HRD({\bf x}) &=& \min\{P(X_{k} \leq x_{k}, \ \forall \ k \geq 1), \ P(X_{k} \geq x_{k}, \ \forall \ k \geq 1)\},
\end{eqnarray*}
respectively. However, as the following theorem shows, such versions of BD and HRD in sequence spaces will have degenerate behaviour for certain $\alpha$-mixing sequences.
\vspace{-0.05in}
\begin{theorem} \label{thm3.1}
Let ${\bf X} = (X_{1},X_{2},\ldots)$ be an $\alpha$-mixing sequence of random variables and denote the distribution of ${\bf X}$ by $\mu$. Also, assume that the mixing coefficients $\{\alpha_{k}\}$ satisfy $\sum_{k=1}^{\infty} \alpha_{k}^{1-1/2p} < \infty$ for some $p \geq 1$, and the $X_{k}$'s are non-atomic for each $k \geq 1$. Then, $BD({\bf x}) = HRD({\bf x}) = 0$ for all ${\bf x}$ with $\mu$-measure one, where $BD({\bf x})$ and $HRD({\bf x})$ denote the band and the half-region depths of ${\bf x}$ with respect to $\mu$, respectively.
\end{theorem}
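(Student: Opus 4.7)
The plan is to reduce both conclusions to a single almost-sure statement about $\mu$: for ${\bf X} \sim \mu$, the sequence $U_k := F_k(X_k)$ satisfies $\inf_{k} \min(U_k,\, 1 - U_k) = 0$ almost surely, where $F_k$ denotes the CDF of $X_k$. Non-atomicity makes each $F_k$ continuous, so each $U_k$ is uniform on $[0,1]$; and since $U_k$ is a measurable function of $X_k$ alone, the $\alpha$-mixing coefficients of $(U_k)$ are no larger than those of $(X_k)$, hence still summable in the same sense.

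First, I would bound the two depths by functions of this infimum. For HRD, marginal monotonicity together with non-atomicity gives, for each $n$,
\begin{equation*}
P(X_k \leq x_k,\ \forall k) \leq F_n(x_n), \qquad P(X_k \geq x_k,\ \forall k) \leq 1 - F_n(x_n),
\end{equation*}
so $HRD({\bf x}) \leq \inf_{n} \min(F_n(x_n),\, 1 - F_n(x_n))$. For BD, since at any fixed coordinate $k$ the $j$ copies $(X_{i,k})_{i=1}^{j}$ are i.i.d.\ with CDF $F_k$, the one-coordinate probability equals $1 - F_k(x_k)^{j} - (1 - F_k(x_k))^{j}$, and the joint probability over all $k$ is bounded by this quantity for any single choice of $k$. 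Summing over $j = 2, \ldots, J$ and noting that $u^{j} + (1-u)^{j} \to 1$ as $\min(u,\, 1-u) \to 0$ (for each of the finitely many $j$), we conclude that $BD({\bf x}) = 0$ whenever $\inf_{k} \min(F_k(x_k),\, 1 - F_k(x_k)) = 0$.

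Next, I would establish the almost-sure statement. Fix $\epsilon \in (0, 1/2)$ and let $B_k := \{\min(U_k,\, 1 - U_k) \leq \epsilon\}$, so $P(B_k) = 2\epsilon$. Using the basic $\alpha$-mixing inequality $|P(A \cap B) - P(A)P(B)| \leq \alpha_g$ for events $A, B$ separated by a gap $g$, telescoping along a sparse subsequence $n_1 < n_2 < \cdots < n_J$ with all consecutive gaps at least $g$ yields
\begin{equation*}
P\Bigl(\bigcap_{j=1}^{J} B_{n_j}^{c}\Bigr) \;\leq\; (1 - 2\epsilon)^{J} + (J-1)\,\alpha_g.
\end{equation*}
Since the summability assumption $\sum_k \alpha_k^{1 - 1/2p} < \infty$ forces $\alpha_g \to 0$, choosing $J$ large and then $g$ large makes the right-hand side arbitrarily small, so $P(\bigcap_k B_k^{c}) = 0$. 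Thus $\inf_{k} \min(U_k,\, 1 - U_k) \leq \epsilon$ almost surely, and intersecting over $\epsilon = 1/n$ gives $\inf_{k} \min(U_k,\, 1 - U_k) = 0$ almost surely, completing the reduction.

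The main obstacle is justifying the telescoping bound on $P\bigl(\bigcap_j B_{n_j}^{c}\bigr)$: one must peel off one event at a time, applying the two-set mixing inequality to replace each $P(B_{n_j}^{c} \cap \cdots)$ by $P(B_{n_j}^{c})\, P(\cdots) + \alpha_g$ and accumulate the additive errors, which is routine by induction on $J$ but must be written out carefully with respect to the $\sigma$-field structure. Everything else reduces to elementary estimates on $u^{j} + (1-u)^{j}$ and on uniform marginals.
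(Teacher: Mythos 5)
Your proposal is correct, and it takes a genuinely different route from the paper. The paper first observes that the conclusion is equivalent to $E\{BD(\widetilde{{\bf X}})\}=E\{HRD(\widetilde{{\bf X}})\}=0$, then works with the coupled sequence $(\min_{1\leq i\leq j}X_{i,k}-\widetilde{X}_{k})_{k\geq 1}$, which inherits the mixing structure, and invokes a strong law of large numbers for $\alpha$-mixing heterogeneous sequences (Corollary 4 of Hansen, 1991) to show that $d^{-1}\sum_{k=1}^{d}I(\min_{i}X_{i,k}>\widetilde{X}_{k})\to 2^{-j}$ a.s., so the escape event occurs for infinitely many $k$; this is where the summability condition on $\{\alpha_{k}\}$ is consumed. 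You instead decouple the problem entirely: you bound both depths by the deterministic marginal quantity $\inf_{k}\min(F_{k}(x_{k}),1-F_{k}(x_{k}))$ (the single-coordinate bounds for HRD and the identity $1-u^{j}-(1-u)^{j}$ for BD are both correct under non-atomicity and the independence of the copies), and then show this quantity vanishes $\mu$-a.s.\ by a blocking argument along a sparse subsequence using only the two-set covariance inequality $|P(A\cap B)-P(A)P(B)|\leq\alpha_{g}$. The telescoping bound $(1-2\epsilon)^{J}+(J-1)\alpha_{g}$ is valid (the accumulated error is $\alpha_{g}\sum_{i=0}^{J-2}(1-2\epsilon)^{i}\leq(J-1)\alpha_{g}$), and choosing $J$ first and then $g$ closes the argument. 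What your approach buys is that it is elementary and self-contained (no external SLLN) and, more interestingly, it only uses $\alpha_{g}\to 0$, so it establishes the theorem under the weaker hypothesis that ${\bf X}$ is $\alpha$-mixing with coefficients tending to zero, without the exponent condition $\sum_{k}\alpha_{k}^{1-1/2p}<\infty$. What the paper's approach buys is a template (reduction to an a.s.\ ``infinitely often'' statement via an SLLN, or via hitting times and 0--1 laws) that transfers more directly to the continuous-time setting of Theorem \ref{thm3.2}.
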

\vspace{-0.05in}
\indent The preceding theorem implies that for i.i.d. copies of a random sequence satisfying appropriate $\alpha$-mixing conditions, any given sample sequence will not lie in a band or a half-region formed by the other sample sequences with probability one. A question that now arises is whether a similar phenomenon holds for probability distributions in function spaces like $C[0,1]$. Unfortunately, as the next theorem shows, BD and HRD continue to exhibit degenerate behaviour for a well-known class of probability measures in $C[0,1]$. 
\vspace{-0.05in}
\begin{theorem} \label{thm3.2}
Let $\{X_{t}\}_{t \in [0,1]}$ be a Feller process having continuous sample paths. Assume that for some $x_{0} \in \mathbb{R}$, $P(X_{0} = x_{0}) = 1$, and the distribution of $X_{t}$ is non-atomic and symmetric about $x_{0}$ for each $t \in (0,1]$. Then, $BD({\bf x}) = HRD({\bf x}) = 0$ for all ${\bf x}$ in a set of $\mu$-measure one, where $\mu$ denotes the probability distribution of ${\bf X}$, and the depth functions BD and HRD are obtained using $\mu$.
\end{theorem}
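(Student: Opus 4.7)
The plan is to unify the two degeneracy claims via a reduction, by Fubini's theorem, to joint probability statements about $n = j+1$ iid copies $X^{(0)}, X^{(1)}, \ldots, X^{(j)}$ of the process, all starting at the common deterministic point $x_0$, and then to exploit Blumenthal's $0$--$1$ law together with the symmetry of the iid copies. Concretely, for HRD it suffices to show $P(X^{(1)}_t \le X^{(0)}_t\ \forall t \in [0,1]) = 0$ (and its mirror image), and for each term of BD it suffices to show $P(\min_{1 \le i \le j} X^{(i)}_t \le X^{(0)}_t \le \max_{1 \le i \le j} X^{(i)}_t\ \forall t) = 0$, since these joint probabilities equal the corresponding conditional probabilities with $X^{(0)} = \mathbf{x}$ integrated against $\mu$.

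Both reductions are implied by the following lemma: for each $k \in \{0, 1, \ldots, j\}$, the event
\[ B_k = \bigcap_{\epsilon > 0} \bigl\{\exists\, t \in (0, \epsilon] : X^{(k)}(t) > X^{(i)}(t)\ \forall\, i \neq k \bigr\} \]
has probability one, and analogously the strict-minimum version of $B_k$ does too. The joint process $\mathbf{Z}_t = (X^{(0)}_t, \ldots, X^{(j)}_t)$ is Feller on $\mathbb{R}^{j+1}$ (since tensor products of Feller semigroups are Feller) and starts at the deterministic point $(x_0, \ldots, x_0)$, so Blumenthal's $0$--$1$ law gives that the germ $\sigma$-algebra $\mathcal{F}_{0+}$ is $P$-trivial; because $B_k \in \mathcal{F}_{0+}$, each $P(B_k) \in \{0, 1\}$, and exchangeability of the iid copies forces all $P(B_k)$ to be equal.

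To rule out $P(B_k) \equiv 0$: if so, then a.s.\ for each $k$ there is a random $\epsilon_k > 0$ such that $X^{(k)}$ fails to be the strict maximum on $(0, \epsilon_k]$; writing $\bar\epsilon := \min_k \epsilon_k > 0$, this forces that at every $t \in (0, \bar\epsilon]$ the top value among the $X^{(k)}(t)$'s is achieved by at least two paths. But non-atomicity together with independence of the marginals give $P(X^{(k)}(t) = X^{(i)}(t)) = 0$ for each fixed $t > 0$ and $i \neq k$, and hence by Fubini the set $\{t \in (0, \bar\epsilon] : \text{a tie at the top occurs at } t\}$ has Lebesgue measure zero almost surely, contradicting the previous conclusion. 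Thus each $P(B_k) = 1$. Applied to $k = 0$, the lemma yields a $t > 0$ at which $X^{(0)}_t$ strictly exceeds all the $X^{(i)}_t$ for $i \ge 1$ (and symmetrically strictly falls below them), which violates the BD inclusion; the $j = 1$ specialization handles HRD.

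The main obstacle is precisely this tie-handling: the pointwise statement $P(\text{tie at } t) = 0$ does not on its own preclude ties over a set of $t$-values of positive Lebesgue measure, and the Fubini-based measure-zero argument above is the crucial bridge that converts pointwise non-atomicity into an a.s.\ statement about the set of tie-times. A secondary technical point is verifying Blumenthal's $0$--$1$ law for the product process, but this follows routinely once one notes that the joint starting state is deterministic and the tensor-product semigroup inherits the Feller property.
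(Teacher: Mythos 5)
Your proof is correct, but it takes a genuinely different route from the paper's. The paper proves two lemmas: first (Lemma \ref{lem6.4}) that the envelope processes $\min_{1\le i\le j}X_{i,t}$ and $\max_{1\le i\le j}X_{i,t}$ immediately visit both sides of $x_{0}$, using the symmetry assumption to get the lower bound $P(\min_{1\le i\le j}X_{i,t}>x_{0})=2^{-j}>0$ before invoking Blumenthal's $0$--$1$ law; and second (Lemma \ref{lem6.5}) that an independent copy immediately crosses any fixed continuous function oscillating around $x_{0}$, again using symmetry for the bound $P(X_{r}>x_{0})=1/2$. These are then combined by conditioning on ${\bf X}_{1},\ldots,{\bf X}_{J}$. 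You instead work with a single germ event for the joint $(j+1)$-dimensional process --- ``copy $k$ is the strict maximum at arbitrarily small times'' --- and replace the symmetry-based positive lower bound by an exchangeability-plus-Fubini argument: if no copy were ever the strict maximum near $0$, the top value would be tied on an interval of positive length, whereas non-atomicity and independence force the set of tie-times to be Lebesgue-null almost surely. This is a clean and valid way to break the $0$--$1$ dichotomy, and it has the notable consequence that the symmetry of the marginals about $x_{0}$ is never used: your argument proves the theorem under the weaker hypotheses of non-atomicity, path continuity, the Feller property and the deterministic start. (A still shorter route to the same end: for fixed $t>0$ the events ``copy $k$ is the strict maximum at $t$'' partition a set of full probability, so each has probability $1/(j+1)$, giving $P(B_{k})\ge 1/(j+1)>0$ directly.) Two routine points deserve a sentence in a full write-up: that $B_{k}$ indeed lies in $\mathcal{F}_{0+}$ (reduce the existential quantifier over $t\in(0,1/n]$ to rational $t$ using path continuity), and the tensor-product Feller claim, which the paper also uses implicitly in its Lemma \ref{lem6.4}.
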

\vspace{-0.05in}
We refer to \cite{RY91} for an exposition on Feller processes that include Brownian motions, geometric Brownian motions and Brownian bridges. The above theorem implies that for many well-known stochastic processes, BD and HRD will be degenerate at zero. Consequently, BD and HRD will not be suitable for depth-based statistical procedures like trimming, identification of central and outlying data points, etc. for such distributions in $C[0,1]$ like HD and PD. Consider next distinct Feller processes ${\bf X}$ and ${\bf Y}$ on $C[0,1]$, and let $BD_{{\bf X}}$, $BD_{{\bf Y}}$, $HRD_{{\bf X}}$ and $HRD_{{\bf Y}}$ denote the BD's and the HRD's obtained using the distributions of ${\bf X}$ and ${\bf Y}$, respectively. Then, if both of ${\bf X}$ and ${\bf Y}$ satisfy the conditions of Theorem \ref{thm3.2}, using the arguments in the proofs of Lemma \ref{lem6.4} and \ref{lem6.5} (see Section \ref{sec:5}), it follows that $BD_{{\bf X}}({\bf z}) = BD_{{\bf Y}}({\bf z}) = HRD_{{\bf X}}({\bf z}) = HRD_{{\bf Y}}({\bf z}) = 
0$ for almost every realization ${\bf z}$ of ${\bf X}$ and ${\bf Y}$. This implies that neither BD nor HRD will be able to discriminate between the distributions of ${\bf X}$ and ${\bf Y}$.  \\
\indent As mentioned in the Introduction, it was observed by \cite{LPR09, LPR11} that the depth functions BD and HRD tend to take small values if the sample curves intersect each other often; and this observation motivated them to consider modified versions of BD and HRD, namely MBD and MHRD, respectively. MBD and MHRD for probability distributions in $C[0,1]$, as defined by \cite{LPR09, LPR11}, are given below. For a fixed ${\bf x} = \{x_{t}\}_{t \in [0,1]} \in C[0,1]$ and $J$ i.i.d. copies ${\bf X}_{i} = \{X_{i,t}\}_{t \in [0,1]}$ of a random element ${\bf X} = \{X_{t}\}_{t \in [0,1]} \in C[0,1]$,
\begin{eqnarray*} 
MBD({\bf x}) &=&  \sum_{j=2}^{J} E\left[\lambda\left(\left\{t \in [0,1] : \min_{i=1,\ldots,j} X_{it} \leq x_{t} \leq \max_{i=1,\ldots,j} X_{it}\right\}\right)\right] \ \mbox{and} \\
MHRD({\bf x}) &=&  \min\{E[\lambda(\{t \in [0,1] : X_{t} \leq x_{t}\})], \ E[\lambda(\{t \in [0,1] : X_{t} \geq x_{t}\})]\}, 
\end{eqnarray*}
where $\lambda(.)$ is the Lebesgue measure on $[0,1]$. \cite{FM01} defined ID for probability measures on $C[0,1]$ as follows. For ${\bf x} = \{x_{t}\}_{t \in [a,b]} \in C[0,1]$ and a random element ${\bf X} = \{X_{t}\}_{t \in [0,1]} \in C[0,1]$, $ID({\bf x}) = \int_{0}^{1} D_{t}(x_{t})dt$, where for every $t$, $D_{t}(.)$ denotes a univariate depth function on the real line obtained using the distribution of $X_{t}$. As observed in \cite{LPR09}, if we choose $J = 2$ in the definition of MBD, then $MBD({\bf x}) = \int_{0}^{1} 2F_{t}(x_{t})(1 - F_{t}(x_{t}))dt$, which is $ID({\bf x})$ defined using the simplicial depth for each coordinate variable. Here $F_{t}$ denotes the distribution of $X_{t}$ for each $t \in [0,1]$. Indeed, we have the following equivalent representations of MBD and MHRD by Fubini's theorem. For any ${\bf x} = \{x_{t}\}_{t \in [0,1]} \in C[0,1]$, 
\begin{eqnarray}
MBD({\bf x}) &=& \sum_{j=2}^{J} E\left[\int_{0}^{1} I\left(\min_{i=1,\ldots,j} X_{it} \leq x_{t} \leq \max_{i=1,\ldots,j} X_{it}\right) dt\right] \nonumber \\
&=& \sum_{j=2}^{J} \int_{0}^{1} \left[1 - F_{t}^{j}(x_{t}-) - (1 - F_{t}(x_{t}))^{j}\right] dt \ \ \mbox{and}    \label{s3.3}  \\
MHRD({\bf x}) &=& \min\left\{E\left[\int_{0}^{1} I(X_{t} \leq x_{t}) dt\right], \ E\left[\int_{0}^{1} I(X_{t} \geq x_{t}) dt\right]\right\}  \nonumber \\
&=& \min\left\{\int_{0}^{1} P(X_{t} \leq x_{t})dt, \ \int_{0}^{1} P(X_{t} \geq x_{t})dt\right\}.    \label{s3.4}
\end{eqnarray}
\indent It is easy to see from (\ref{s3.3}) that if ${\bf X} = \{X_{t}\}_{t \in [0,1]} \in C[0,1]$ is symmetrically distributed about ${\bf a} = \{a_{t}\}_{t \in [0,1]} \in C[0,1]$, i.e., ${\bf X} - {\bf a}$ and ${\bf a} - {\bf X}$ have the same distribution, then MBD has a unique maximum at ${\bf a}$. The same is true for ID provided that for all $t \in [0,1]$, the univariate depth $D_{t}$ in the definition of ID has a unique maximum at $a_{t}$ (cf. the property ``FD4center'' in \citet[p. $10$]{MP12}, Theorems $3$ and $4$ in \cite{Liu90} and property ``P2'' in \citet[p. $463$]{ZS00a}). Consider next ${\bf x} = \{x_{t}\}_{t \in [0,1]} \in C[0,1]$ and ${\bf y} = \{y_{t}\}_{t \in [0,1]} \in C[0,1]$ satisfying either $a_{t} \leq x_{t} \leq y_{t}$ or $y_{t} \leq x_{t} \leq a_{t}$ for all $t \in [0,1]$, i.e., ${\bf y}$ is farther away from ${\bf a}$ than ${\bf x}$. Then, $MHRD({\bf y}) \leq MHRD({\bf x})$ and $MBD({\bf y}) \leq MBD({\bf x})$. Further, if $D_{t}(x_{t})$  is a decreasing function of $|x_{t} - a_{t}|$ for all $t \in [0,1]$, we have $ID({\bf y}) \leq ID({\bf x})$ (cf. the ``FD4pw Monotone'' property in \citet[p. $9$]{MP12}). Consider next any ${\bf x} = \{x_{t}\}_{t \in [0,1]} \in C[0,1]$ satisfying $x_{t} \neq 0$ for all $t$ in a subset of $[0,1]$ with Lebesgue measure one. It follows from representations (\ref{s3.3}) and (\ref{s3.4}) for MBD and MHRD that both $MBD({\bf a} + n{\bf x})$ and $MHRD({\bf a} + n{\bf x})$ converge to zero as $n \rightarrow \infty$. Further, if $D_{t}(s) \rightarrow 0$ as $|s - a_{t}| \rightarrow \infty$ for all $t \in [0,1]$, then $ID({\bf a} + n{\bf x}) \rightarrow 0$ as $n \rightarrow \infty$. So, all these depth functions tend to zero as one moves away from the center of symmetry along suitable lines. This can be viewed as a weaker version of the ``FD3'' property in \cite{MP12} (see also Theorem $1$ in \cite{Liu90} and property ``P4'' in \citet[p. $464$]{ZS00a}). \\
\indent The following theorem shows that MBD, MHRD and ID have non-degenerate distributions with adequate spread for a class of probability distributions in $C[0,1]$ that includes many popular stochastic models. The properties of these depth functions discussed above and the theorem stated below show that these depth functions are suitable choices for a center-outward ordering of elements of $C[0,1]$ with respect to the distributions of a large class of stochastic processes, and can be used for constructing central and outlying regions, trimmed estimators, and also for outlier detection. Moreover, due to the continuity of ID and MBD, and the fact that they attain their unique maximum at the centre of symmetry of any probability distribution, both of these depth functions will be able to discriminate between two distributions with distinct centres of symmetry. \\
\indent For the next theorem, in the definition of ID, we shall assume $D_{t}(.) = \psi(F_{t}(.))$ for all $t \in [0,1]$, where $\psi$ is a bounded continuous positive function satisfying $\psi(0+) = \psi(1-) = 0$, and $F_{t}$ denotes the distribution of $Y_{t}$.
\vspace{-0.05in}
\begin{theorem} \label{thm3.3}
Consider the process ${\bf X} = \{X_{t}\}_{t \in [0,1]} = \{g(t,Y_{t})\}_{t \in [0,1]}$, where $\{Y_{t}\}_{t \in [0,1]} \in C[0,1]$ is a fractional Brownian motion starting at some $y_{0} \in \mathbb{R}$. Assume that the function $g : [0,1] \times \mathbb{R}$ is continuous, and $g(t,.)$ is strictly increasing with $g(t,s) \rightarrow \infty$ as $s \rightarrow \infty$ for each $t \in [0,1]$. Then the following hold. \\
(a) The depth functions $MBD({\bf x})$, $MHRD({\bf x})$ and $ID({\bf x})$ take all values in $(0,A_{J}]$, $(0,1/2]$ and $\psi((0,1))$, respectively, as ${\bf x}$ varies in $C[0,1]$, where MBD, MHRD and ID are obtained using the distribution of ${\bf X}$, and $A_{J} = J - 2 + 2^{-J+1}$ for any $J \geq 2$ with $J$ as in the definitions of BD and MBD. \\
(b) The supports of the distributions of $MBD(\widetilde{{\bf X}})$, $MHRD(\widetilde{{\bf X}})$ and $ID(\widetilde{{\bf X}})$ are $[0,A_{J}]$, $[0,1/2]$ and the closure of $\psi((0,1))$, respectively. Here $\widetilde{{\bf X}}$ denotes an independent copy of ${\bf X}$.  \\
(c) The above conclusions hold if $\{Y_{t}\}_{t \in [0,1]}$ is a fractional Brownian bridge starting at $y_{0} \in \mathbb{R}$.
\end{theorem}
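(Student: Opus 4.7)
The plan is to reduce the whole theorem to the case $\mathbf{X}=\mathbf{Y}$ (an fBM, or an fBB for part (c)) by a change of variable, then exhibit one explicit one-parameter family of curves whose depths sweep out the claimed ranges, and finally combine continuity of the three depth functionals with the Gaussian support theorem to identify the supports of the depth distributions. Since $g(t,\cdot)$ is a strictly increasing continuous bijection onto its range, we have $\min_{i}X_{i,t}=g(t,\min_{i}Y_{i,t})$, $\max_{i}X_{i,t}=g(t,\max_{i}Y_{i,t})$ and $\{X_{t}\le x_{t}\}=\{Y_{t}\le g^{-1}(t,x_{t})\}$ for every $x_{t}$ in the range of $g(t,\cdot)$. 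Setting ${\bf y}=\{g^{-1}(t,x_{t})\}_{t}$, the representations (\ref{s3.3}), (\ref{s3.4}) and the definition of $ID$ give $MBD_{{\bf X}}({\bf x})=MBD_{{\bf Y}}({\bf y})$, $MHRD_{{\bf X}}({\bf x})=MHRD_{{\bf Y}}({\bf y})$ and $ID_{{\bf X}}({\bf x})=ID_{{\bf Y}}({\bf y})$. So it suffices to prove everything under the fBM (resp.\ fBB) law itself.

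For part (a), I would use the one-parameter family ${\bf y}^{(c)}\in C[0,1]$ defined by $y^{(c)}_{t}=y_{0}+c\,t^{H}$ for $c\in\mathbb{R}$, where $H$ is the Hurst index. The self-similarity $Y_{t}-y_{0}\sim N(0,t^{2H})$ for $t>0$ gives the crucial identity $F_{t}(y^{(c)}_{t})=\Phi(c)$, constant in $t$, where $\Phi$ is the standard normal CDF. Substituting into the representations yields $MHRD({\bf y}^{(c)})=\min(\Phi(c),1-\Phi(c))$, $MBD({\bf y}^{(c)})=\sum_{j=2}^{J}[1-\Phi(c)^{j}-(1-\Phi(c))^{j}]$ and $ID({\bf y}^{(c)})=\psi(\Phi(c))$. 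Since $c\mapsto\Phi(c)$ is a continuous bijection of $\mathbb{R}$ onto $(0,1)$, these three continuous functions of $c$ sweep out exactly $(0,1/2]$, $(0,A_{J}]$ and $\psi((0,1))$ respectively, with the endpoints $1/2$ and $A_{J}$ attained at $c=0$ via the geometric-series identity $\sum_{j=2}^{J}(1-2^{-j+1})=J-2+2^{-J+1}=A_{J}$. That these are the full depth ranges over all ${\bf x}\in C[0,1]$ follows by applying the pointwise bounds $\min(p,1-p)\le 1/2$, $1-p^{j}-(1-p)^{j}\le 1-2^{-j+1}$ and $\psi(p)\le\max_{[0,1]}\psi$ with $p=F_{t}(x_{t})$, uniformly in $t$.

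For part (b), I would first verify that $MBD$, $MHRD$ and $ID$ are continuous functionals on $(C[0,1],\|\cdot\|_{\infty})$: since $F_{t}$ is continuous on $\mathbb{R}$ for every $t>0$, uniform convergence ${\bf x}^{(n)}\to{\bf x}$ gives pointwise convergence of $F_{t}(x^{(n)}_{t})$, and dominated convergence applied to (\ref{s3.3})--(\ref{s3.4}) and to the $ID$ integral gives the claim. Next, by the support theorem for Gaussian measures on separable Banach spaces, the topological support of the law of $\widetilde{\bf Y}$ in $C[0,1]$ is $\{f\in C[0,1]:f(0)=y_{0}\}$, which contains every ${\bf y}^{(c)}$; hence every sup-norm ball around ${\bf y}^{(c)}$ has positive probability under $\widetilde{\bf Y}$. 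Combining this with part (a), every value in $(0,1/2]$, $(0,A_{J}]$ and $\psi((0,1))$ lies in the support of the corresponding depth distribution, and $0$ is adjoined as a limit point by sending $|c|\to\infty$. Together with the a.s.\ upper bounds $MBD\le A_{J}$, $MHRD\le 1/2$, $ID\le\max_{[0,1]}\psi$, the three supports are identified as $[0,A_{J}]$, $[0,1/2]$ and $\overline{\psi((0,1))}$.

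Part (c) is identical to parts (a) and (b) modulo replacing $y_{0}+c\,t^{H}$ by $\mu_{t}+c\,\sigma_{t}$, where $\mu_{t}$ and $\sigma_{t}$ are the marginal mean and standard deviation of the fBB: since $\sigma_{t}=0$ only at the fixed endpoints and $\sigma_{t}>0$ elsewhere, these curves remain in $C[0,1]$ with the required endpoint behaviour, and again $F_{t}(y^{(c)}_{t})=\Phi(c)$ at each $t$ where $\sigma_{t}>0$, so the whole argument transfers. The main technical obstacle throughout is the invocation of the Gaussian support theorem for fBM and fBB with arbitrary Hurst parameter; since we only need that the specific explicit H\"older-$H$ curves ${\bf y}^{(c)}$ belong to the topological support (rather than a complete description of the Cameron--Martin space), this can be bypassed by directly approximating each ${\bf y}^{(c)}$ in sup-norm by smooth elements of the Cameron--Martin space.
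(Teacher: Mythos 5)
Your proposal is correct and follows essentially the same route as the paper: the same family of quantile curves (your $y_0+ct^{H}$ is exactly the paper's ${\bf y}_p=\{F_t^{-1}(p)\}$ with $p=\Phi(c)$), the same continuity-of-the-depth-functionals argument via dominated convergence, and the same appeal to the Guasoni support theorem for fBM/fBB. The only cosmetic difference is that you pull the problem back to the $\{Y_t\}$ process through $g^{-1}(t,\cdot)$, whereas the paper pushes forward through the map ${\bf G}({\bf f})=\{g(t,f_t)\}$ and proves its continuity as a separate lemma; both devices accomplish the same transfer of support points and depth values.
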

\vspace{-0.05in}
\indent Note that since $\psi$ is a continuous non-constant function, the support of the distribution of $ID(\widetilde{{\bf X}})$ is actually a closed non-degenerate interval. Here, by the support of a probability distribution in any metric space, we mean the smallest closed set with probability one. Let us also observe that in the above theorem, the depths are computed based on the entire process ${\bf X} = \{X_{t}\}_{t \in [0,1]}$ starting from time $t = 0$. But in practice, it might very often be the case that we observe the process from some time point $t_{0} > 0$, and then the depths are to be computed based on the observed path $\{X_{t}\}_{t \in [t_{0},1]}$. Even in that case, the conclusions of the above theorem hold (see Remark \ref{rem6.1} in Section \ref{sec:5}). 

\section{Spatial depth in infinite dimensional spaces}
\label{sec:3} In this section, we shall consider an extension of the notion of spatial depth from $\mathbb{R}^{d}$ into infinite dimensional spaces. Spatial depth of ${\bf x} \in \mathbb{R}^{d}$ with respect to the probability distribution of a random vector ${\bf X} \in \mathbb{R}^{d}$ is defined as $SD({\bf x}) = 1 - ||E\{({\bf x} - {\bf X})/||{\bf x} - {\bf X}||\}||$ (see, e.g., \cite{VZ00} and \cite{Serf02}). It has been widely used for various statistical procedures including clustering and classification (see, e.g., \cite{Jorn04} and \cite{GC05a}), construction of depth-based central and outlying regions and depth-based trimming (see \cite{Serf06}). This depth function extends naturally to any Hilbert space ${\cal X}$. For an ${\bf x} \in {\cal X}$ and a random element ${\bf X} \in {\cal X}$, we can define $SD({\bf x})$ using the same expression as above, where $||.||$ is to be taken as the usual norm in ${\cal X}$, and the expectation is in the Bochner sense (see, e.g., \citet[p. $100$]{AG80}). 
Spatial depth function inherits many of its interesting properties from finite dimensions. For instance, $SD({\bf x})$ is a continuous function in ${\bf x}$ if the distribution of ${\bf X}$ is non-atomic, which is a direct consequence of dominated convergence theorem. Further, it follows from \cite{Kemp87} that if ${\cal X}$ is strictly convex, and the distribution of ${\bf X}$ is non-atomic and is not supported on a line in ${\cal X}$, then the function SD has a unique maximum at the spatial median ${\bf m}$ (say) of ${\bf X}$, and its maximum value is $1$ (cf. the property ``FD4center'' in \citet[p. $10$]{MP12}, Theorems $3$ and $4$ in \cite{Liu90} and property ``P2'' in \citet[p. $463$]{ZS00a}). Further, if we consider the sequence $\{{\bf m} + n{\bf x}\}_{n \geq 1}$ for any fixed non-zero ${\bf x} \in {\cal X}$, it follows by a simple application of dominated convergence theorem that $SD({\bf m} + n{\bf x}) \rightarrow 0$ as $n \rightarrow \infty$ (cf. the ``FD3'' property in \cite{MP12}, Theorem $1$ in \cite{Liu90} and property ``P4'' in \citet[p. $464$]{ZS00a}). \\
\indent A natural question that arises now is whether SD suffers from degeneracy similar to what was observed in the case of some of the depth functions discussed earlier or its distribution is well spread out. As the next theorem shows, the distribution of SD is actually supported on the entire unit interval for a large class of probability measures in a separable Hilbert space ${\cal X}$ including Gaussian probabilities.
\vspace{-0.05in}
\begin{theorem} \label{thm4.1}
Let ${\cal X}$ be a separable Hilbert space and consider a random element ${\bf X} = \sum_{k=1}^{\infty} X_{k}\phi_{k}$, where $\{\phi_{k}\}_{k \geq 1}$ is an orthonormal basis of ${\cal X}$. Assume that ${\bf X}$ has a nonatomic probability distribution $\mu$ with $\sum_{k=1}^{\infty} E(X_{k}^{2}) < \infty$, and the support of the conditional distribution of $(X_{1},X_{2},\ldots,X_{d})$ given $(X_{d+1},X_{d+2},\ldots)$ is the whole of $\mathbb{R}^{d}$ for each $d \geq 1$. Then, the function $SD({\bf x})$ defined using the distribution $\mu$ takes all the values in $(0,1]$ as ${\bf x}$ varies in ${\cal X}$. Further, if $\widetilde{{\bf X}}$ denotes an independent copy of ${\bf X}$, the support of the distribution of $SD(\widetilde{{\bf X}})$ will be the whole of $[0,1]$.
\end{theorem}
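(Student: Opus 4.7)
The plan is to establish parts (a) and (b) in three steps: (i) show that the conditional full-support hypothesis actually forces $\operatorname{supp}(\mu) = {\cal X}$; (ii) use this together with results stated in the paragraph preceding the theorem to derive (a) via an intermediate-value argument along a ray from the spatial median; and (iii) deduce (b) from continuity of $SD$ combined with the standard identity $\operatorname{supp}(h_{*}\mu) = \overline{h(\operatorname{supp}\mu)}$ for a continuous $h$. The main obstacle I expect is step (i): full support is a joint statement on ${\cal X}$ that has to be assembled from one-dimensional conditional pieces while keeping the infinite tail under control.

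For step (i), fix ${\bf x}_0 = \sum_{k=1}^{\infty} a_k \phi_k \in {\cal X}$ and $\varepsilon > 0$, and aim to show that the open ball of radius $\varepsilon$ around ${\bf x}_0$ has positive $\mu$-measure. Since both $\sum_{k=1}^{\infty} a_k^2$ and $\sum_{k=1}^{\infty} E(X_k^2)$ are finite, I would choose $d$ so large that $\sum_{k>d} a_k^2 < \varepsilon^2/8$ and $\sum_{k>d} E(X_k^2) < \varepsilon^2/16$. Chebyshev's inequality applied to $\sum_{k>d} X_k^2$ and the crude bound $(X_k - a_k)^2 \leq 2X_k^2 + 2a_k^2$ would then yield $P\bigl(\sum_{k>d}(X_k - a_k)^2 < \varepsilon^2/2\bigr) \geq 1/2$. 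Conditionally on the tail $(X_{d+1}, X_{d+2}, \ldots)$, the hypothesis states that the support of $(X_1,\ldots,X_d)$ is all of $\mathbb{R}^d$, so the open event $\bigl\{\sum_{k=1}^{d}(X_k - a_k)^2 < \varepsilon^2/2\bigr\}$ has strictly positive conditional probability for every realization of the tail. Integrating over the tail, the intersection of these two events has positive $\mu$-probability, and on it $\|{\bf X} - {\bf x}_0\| < \varepsilon$; hence $\operatorname{supp}(\mu) = {\cal X}$.

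Steps (ii) and (iii) then follow quickly from the facts already collected before the theorem. Non-atomicity of $\mu$ makes $SD(\cdot)$ continuous on ${\cal X}$; strict convexity of ${\cal X}$ together with non-atomicity of $\mu$ and non-concentration on a line (a consequence of full support proved above) guarantee that the spatial median ${\bf m}$ is the unique maximizer with $SD({\bf m}) = 1$; and for any non-zero ${\bf v} \in {\cal X}$, $SD({\bf m} + n{\bf v}) \to 0$ as $n \to \infty$. Continuity of $t \mapsto SD({\bf m} + t{\bf v})$ and the intermediate value theorem then deliver every value in $(0,1]$, giving (a). The value $0$ itself is never attained, because $SD({\bf x}) = 0$ would force the unit vector $({\bf x} - {\bf X})/\|{\bf x} - {\bf X}\|$ to be $\mu$-a.s.\ constant (by the equality case in the Bochner norm inequality in a Hilbert space), placing ${\bf X}$ on a one-dimensional ray and contradicting full support in an infinite-dimensional ${\cal X}$. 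Finally, for any continuous $h: {\cal X} \to \mathbb{R}$ and Borel probability $\mu$ on ${\cal X}$, one has $\operatorname{supp}(h_{*}\mu) = \overline{h(\operatorname{supp}\mu)}$; applying this with $h = SD$ yields $\operatorname{supp}(SD(\widetilde{{\bf X}})) = \overline{SD({\cal X})} = \overline{(0,1]} = [0,1]$, which is (b).
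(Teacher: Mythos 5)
Your proposal is correct and follows essentially the same route as the paper: establish full support of $\mu$ by combining a Markov/Chebyshev bound on the tail $\sum_{k>d}$ with the conditional full-support hypothesis on the first $d$ coordinates, then use continuity of $SD$, the value $1$ at the spatial median, and decay along rays to get the interval $(0,1]$, and finally pass to the support of $SD(\widetilde{\bf X})$ via continuity. The only additions beyond the paper's argument are cosmetic: you bound $\sum_{k>d}(X_k-a_k)^2$ directly rather than truncating ${\bf x}$ first, and you make explicit (via the equality case of $\|E{\bf u}\|\leq 1$ for unit vectors ${\bf u}$) why $SD$ never attains $0$, a point the paper leaves implicit.
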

\vspace{-0.05in}
\indent Since $C[0,1] \subseteq L_{2}[0,1]$, for any probability distribution on $C[0,1]$, SD is defined in the same way as in the case of the separable Hilbert space $L_{2}[0,1]$. Thus, for a random element ${\bf X} \in C[0,1]$, if the sequence $(X_{1},X_{2},\ldots)$ obtained from the orthogonal decomposition of ${\bf X}$ in $L_{2}[0,1]$ satisfies the conditions of Theorem \ref{thm4.1}, then the support of the distribution of $SD(\widetilde{{\bf X}})$ will be the whole of $[0,1]$. In particular, for any Gaussian process having a continuous mean function and a continuous positive definite covariance kernel, we can have $(X_{1},X_{2},\ldots)$ to be the coefficients of the Karhunen-Loeve expansion of ${\bf X}$, which will then be a sequence of independent Gaussian random variables, and consequently, the conditions of Theorem \ref{thm4.1} will hold. Those assumptions, however, need not hold when ${\bf X}$ is a function of some Gaussian process in $C[0,1]$ like what we have considered in Theorem \ref{thm3.3}. 
Indeed, even if ${\bf X}$ admits a Karhunen-Loeve type expansion in such a case, the sequence of coefficients need not satisfy the conditions of Theorem \ref{thm4.1}. However, as the next theorem shows, the distribution of SD has full support on the unit interval in some of these situations as well. 
\vspace{-0.05in}
\begin{theorem}  \label{thm4.2}
Consider the process ${\bf X} = \{X_{t}\}_{t \in [0,1]} = \{g(t,Y_{t})\}_{t \in [0,1]}$ as in Theorem \ref{thm3.3}. Then, the function $SD({\bf x})$ defined using the distribution of ${\bf X}$ takes all values in $(0,1)$ as ${\bf x}$ varies in $C[0,1]$. Moreover, the support of the distribution of $SD(\widetilde{{\bf X}})$ is the whole of $[0,1]$, where $\widetilde{{\bf X}}$ is an independent copy of ${\bf X}$. 
\end{theorem}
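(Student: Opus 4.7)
The plan is to prove (a) via an intermediate value argument that uses continuity of $SD$ on the ambient Hilbert space ${\cal H} = L_{2}[0,1]$ together with a verification that $SD$ attains values arbitrarily close to both $0$ and $1$ on $C[0,1]$, and then to deduce (b) by combining (a) with the classical full-support theorem for fractional Brownian motions and bridges.

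For part (a), first note that since $\{Y_{t}\}_{t \in [0,1]}$ has continuous sample paths and $g$ is jointly continuous with $g(t,\cdot)$ strictly increasing, the distribution of ${\bf X}$ on ${\cal H}$ is non-atomic; by the remark preceding Theorem \ref{thm4.1}, $SD$ is therefore continuous on ${\cal H}$, hence on $C[0,1]$. Next, fix any ${\bf a} \in C[0,1]$ and any non-zero ${\bf x}_{0} \in C[0,1]$; the relation $SD({\bf a} + n {\bf x}_{0}) \rightarrow 0$ as $n \rightarrow \infty$, noted in the opening paragraph of this section, shows that $SD$ takes values arbitrarily close to $0$ on $C[0,1]$. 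To show $SD$ takes values arbitrarily close to $1$, the key step is to check that the distribution $\mu$ of ${\bf X}$ in ${\cal H}$ is not concentrated on any one-dimensional affine subspace; this holds because the covariance kernel of $Y$ is positive definite (for both fractional Brownian motion and fractional Brownian bridges) and $g(t,\cdot)$ is strictly increasing and hence injective. By the Kemperman-type result \cite{Kemp87} cited at the beginning of this section, the spatial median ${\bf m} \in {\cal H}$ exists, is unique, and satisfies $SD({\bf m}) = 1$. Since $C[0,1]$ is dense in ${\cal H}$, one may pick ${\bf x}_{n} \in C[0,1]$ with ${\bf x}_{n} \rightarrow {\bf m}$ in ${\cal H}$, and continuity then forces $SD({\bf x}_{n}) \rightarrow 1$. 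Finally, for any $s \in (0,1)$, I pick ${\bf x}_{1}, {\bf x}_{2} \in C[0,1]$ with $SD({\bf x}_{1}) < s < SD({\bf x}_{2})$ and apply the one-dimensional intermediate value theorem to the continuous map $\lambda \mapsto SD({\bf x}_{1} + \lambda({\bf x}_{2} - {\bf x}_{1}))$ on $[0,1]$ to produce an element of $C[0,1]$ at which $SD$ equals $s$.

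For part (b), I invoke the classical support characterization that fractional Brownian motion starting at $y_{0}$ has support, in the sup-norm topology on $C[0,1]$, equal to $\{f \in C[0,1] : f(0) = y_{0}\}$, and analogously for the bridge with the endpoint at $1$ also fixed. Since $g(t,\cdot)$ is a homeomorphism for each $t$ and $g$ is jointly continuous, the support of ${\bf X}$ in $C[0,1]$ is $\{h \in C[0,1] : h(0) = g(0,y_{0})\}$, with an analogous constraint at $t = 1$ in the bridge case. Given any $s \in [0,1]$ and $\epsilon > 0$, part (a) together with the two limiting arguments above yields some ${\bf x}^{*} \in C[0,1]$ with $|SD({\bf x}^{*}) - s| < \epsilon/2$. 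A small modification of ${\bf x}^{*}$ near the endpoint(s) produces $\widehat{\bf x}$ in the support of ${\bf X}$ with $\|\widehat{\bf x} - {\bf x}^{*}\|_{{\cal H}}$ so small that continuity of $SD$ forces $|SD(\widehat{\bf x}) - SD({\bf x}^{*})|$ to be at most $\epsilon/4$. The full-support property then gives $P(\|\widetilde{\bf X} - \widehat{\bf x}\|_{\infty} < \delta) > 0$ for every $\delta > 0$, and choosing $\delta$ small enough that continuity of $SD$ forces $|SD(\widetilde{\bf X}) - SD(\widehat{\bf x})| < \epsilon/4$ on this event places $s$ in the support of $SD(\widetilde{\bf X})$.

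The main obstacle is verifying for part (a) that $\mu$ is not concentrated on any line in ${\cal H}$, since the composition with the nonlinear $g$ blocks a direct appeal to the Gaussianity of $Y$, and the argument must proceed via coordinatewise injectivity of $g(t,\cdot)$ combined with positive definiteness of the covariance kernel. The small-perturbation step at the boundary of $[0,1]$ in part (b) is routine once the classical support theorem for fractional Brownian motion (or bridge) is granted.
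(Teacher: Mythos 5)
Your part (a) is essentially correct and in fact takes a slightly more economical route than the paper: you run the intermediate value argument along segments in $C[0,1]$ itself (which is trivially convex), whereas the paper works inside the image set $H_{0} = {\bf G}(C_{y_{0}}[0,1])$ and must first prove that $H_{0}$ is convex (Lemma \ref{lem6.2}). For the purpose of showing that $SD$ attains every value in $(0,1)$ somewhere in $C[0,1]$, your shortcut is legitimate.

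Part (b), however, has a genuine gap, and it is precisely the gap that the paper's extra machinery (Lemmas \ref{lem6.2} and \ref{lem6.3}, and the appeal to Remark 4.20 of \cite{Kemp87}) is there to close. You assert that the support of ${\bf X}$ in $C[0,1]$ is $\{h \in C[0,1] : h(0) = g(0,y_{0})\}$. This is false in general: the hypotheses of Theorem \ref{thm3.3} only require $g(t,\cdot)$ to be strictly increasing with $g(t,s) \rightarrow \infty$ as $s \rightarrow \infty$; nothing is assumed as $s \rightarrow -\infty$, so the range of $g(t,\cdot)$ may be a proper half-line. For the geometric Brownian motion case $g(t,s) = e^{s}$, the support of ${\bf X}$ consists only of positive functions. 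Consequently, the point ${\bf x}^{*}$ you obtain from the intermediate value theorem on a segment in $C[0,1]$ may sit at a fixed positive $L_{2}$-distance from the entire support of ${\bf X}$ (e.g., a function that is negative on a set of positive measure), and no ``small modification near the endpoint(s)'' can move it into the support while keeping $\|\widehat{\bf x} - {\bf x}^{*}\|_{\mathcal H}$ small. A related unaddressed point is why values of $SD$ near $1$ are attained near the support at all: the paper invokes Remark 4.20 of \cite{Kemp87} to place the spatial median ${\bf m}$ in the $L_{2}$-closure of the convex set $H_{0}$, which is where the convexity of $H_{0}$ becomes essential. The correct repair is the paper's: show $H_{0}$ is convex, carry out the interval argument entirely inside $H_{0}$ (explicitly exhibiting elements ${\bf r}_{n} \in H_{0}$ with $SD({\bf r}_{n}) \rightarrow 0$ and approximating ${\bf m}$ from within $H_{0}$), and then use the fact that every point of $H_{0}$ is a support point of $\widetilde{\bf X}$ to conclude that the support of $SD(\widetilde{\bf X})$ is $[0,1]$.
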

\vspace{-0.05in}
\indent It follows from arguments that are very similar to those in Remark \ref{rem6.1} in Section \ref{sec:5} that the above result holds even if SD is computed based on the process $\{X_{t}\}_{t \in [t_{0},1]}$, where $t_{0} > 0$. The properties of SD stated at the beginning of this section along with the results in Theorems \ref{thm4.1} and \ref{thm4.2} imply that like ID, MBD and MHRD, SD can also be used for various depth-based statistical procedures. The spatial depth function can also be used to discriminate between two probability measures in a separable Hilbert space or $C[0,1]$. For instance, for any two non-atomic probability measures having distinct and unique spatial medians, the associated spatial depth functions will be continuous, each having a unique maximum at the corresponding spatial median. In that case, spatial depth will be able to distinguish between the two distributions. \\
\indent We conclude this section with the discussion of another notion of depth, which has a somewhat similar nature as that of SD. For a random element ${\bf X}$ and a fixed element ${\bf x}$ in $L_{2}[0,1]$, the h-depth introduced in \cite{CFF07} is defined as $E\{K_{h}(||{\bf x} - {\bf X}||)\}$, where $K_{h}(t) = h^{-1}K(t/h)$ for some fixed kernel $K$ and $h > 0$ is a tuning parameter. Suppose that $K$ is a bounded continuous probability density function supported on the whole of $[0,\infty)$ with $K(s) \rightarrow 0$ as $s \rightarrow \infty$, and $(X_{1},X_{2},\ldots)$ satisfies the conditions in Theorem \ref{thm4.1}, which ensures that the support of the distribution of ${\bf X}$ is the whole of $L_{2}[0,1]$. Then, in view of the continuity of $E\{K_{h}(||{\bf x} - {\bf X}||)\}$ as a function of ${\bf x}$, which is a consequence of the dominated convergence theorem, it is not difficult to see that the support of the distribution of the h-depth evaluated at an independent copy $\widetilde{{\bf X}}$ of 
${\bf X}$ will be the whole of $[0,A]$. Here $A = \sup_{{\bf x} \in L_{2}[0,1]} E\{K_{h}(||{\bf x} - {\bf X}||)\}$. However, no specific guidelines are available for choosing $K$ and $h$ in practice.

\section{Demonstration using real and simulated data}
\label{sec:4} In the three preceding sections, we have investigated the behaviour of several depth functions in infinite dimensional spaces. The results derived in those sections are all about the population versions of different depth functions. In this section, we try to investigate to what extent those results are reflected in the empirical versions of the corresponding depth functions computed using some simulated and real datasets. First, we shall consider some simulated and real sequence data. The simulated dataset consists of $50$ i.i.d. observations on a $d$-dimensional Gaussian random vector ${\bf X} = (X_{1},X_{2},\ldots,X_{d})$ that satisfies $Cov(X_{k},X_{l}) =  r^{-|k-l|}/(kl)^{2}$, where $r = 0.1$, $k,l = 1,2,\ldots,d$, and $d=4000$. The real dataset that is considered next is obtained from \texttt{http://datam.i2r.a-star.edu.sg/datasets/krbd/ColonTumor/ColonTumor.zip}, and it contains expressions of $d=2000$ genes in tumor tissue biopsies corresponding to $40$ colon tumor patients and $22$ normal samples of colon tissue. For both these datasets, we can view each sample point as the first $d$ coordinates of an infinite sequence. \\
\indent In all our samples, since the dimension is much larger than the sample size, the empirical versions of both of HD and PD turn out to be zero (see Figure \ref{Fig:1}). This is a consequence of the fact that when the dimension is larger than the sample size, and no sample point lies in the subspace spanned by the remaining sample points, the HD and the PD of any data point with respect to the empirical distribution of the remaining data points is zero (see, e.g., remarks at the beginning of Section $4$ in \cite{DGC11}). It is also observed from the dotplots in Figure \ref{Fig:1} that empirical BD and HRD are both degenerate at zero for the two datasets. However, the distribution of empirical SD is well spread out in the corresponding dotplots in Figure \ref{Fig:1}. \\
\indent For the colon data, we have prepared another dotplot (see Fig. \ref{Fig:4}), which shows the difference between the two empirical SD values for each data point, where one depth value is obtained with respect to the empirical distribution of the tumor tissue sample, and the other one is obtained using that of the normal sample. The value of this difference for a data point corresponding to the tumor tissue is plotted in the panel with heading ``Tumor tissue'', where all the values are positive. This implies that each data point in the sample of tumor tissue has higher depth value with respect to the empirical distribution of the tumor sample than its depth value with respect to the empirical distribution of the normal tissue sample. On the other hand, a data point corresponding to the normal tissue is plotted in the panel with heading ``Normal tissue'', where all the values, except only two, are negative. In other words,  except for those two cases, each data point in the sample of normal tissue has higher depth value with respect to the normal tissue sample. Thus, SD adequately discriminates between the two samples, and maximum depth or other depth-based classifiers (see, e.g., \cite{GC05a} and \cite{LCAL12}) constructed using SD will yield good results for this dataset.

\begin{figure}[htbp]
\begin{center}
\includegraphics[height=3.8in,width=3.0in,angle=270]{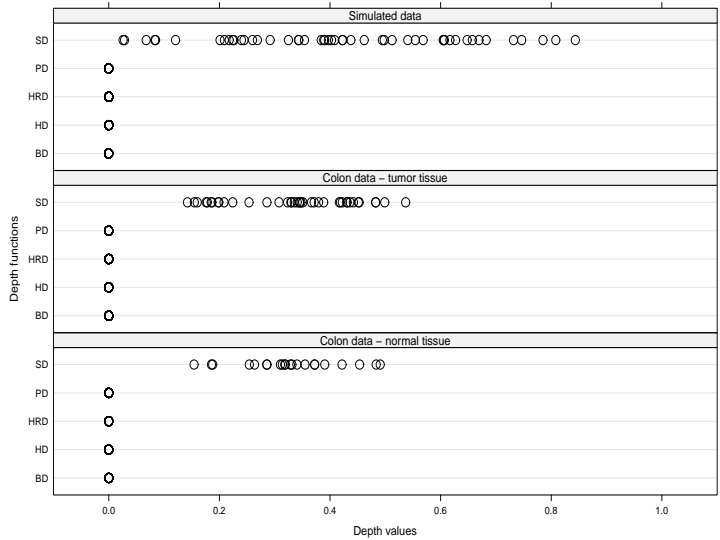} 
\end{center}
\caption{{\it Dotplots of SD, PD, HRD, HD and BD for simulated data and colon data.}}
\label{Fig:1}
\end{figure}

\begin{figure}[htbp]
\begin{center}
\includegraphics[height=3.7in,width=2.7in,angle=270]{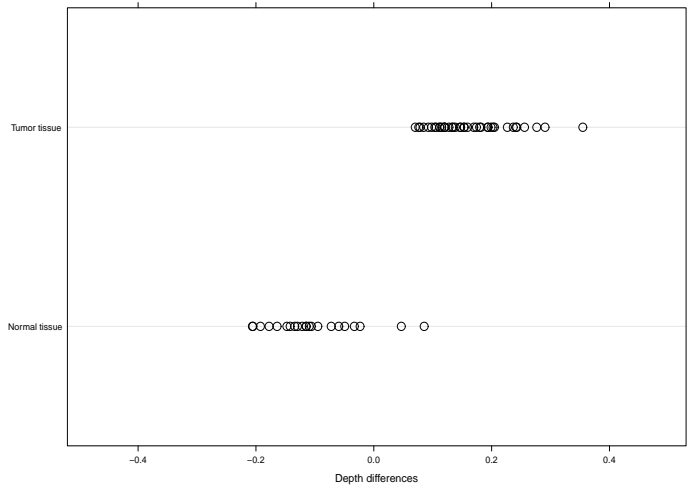} 
\end{center}
\caption{{\it Dotplots of depth differences based on SD for colon data. The horizantal axis corresponds to the difference between empirical SD values of each data point with respect to the tumor tissue sample and the normal tissue sample.}}
\label{Fig:4}
\end{figure}

\indent We shall next consider some simulated and real functional data. Each of the three simulated datasets consists of $50$ observations from (i) a standard Brownian motion on $[0,1]$, (ii) a zero mean fractional Brownian motion on $[0,1]$ with covariance function $K(t,s) = (1/2) [t^{2H} + s^{2H} - |t-s|^{2H}]$, where $t,s \in [0,1]$, and we choose the Hurst index $H = 0.75$, and (iii) a geometric Brownian motion defined as $X_{t} = exp((r-\sigma^2/2)t + \sigma B_{t})$, where $t \in [0,1]$ and $r = \sigma = 0.5$. Here $\{B_{t}\}_{t \in [0,1]}$ denotes the standard Brownian motion on $[0,1]$. For all three simulated datasets, the sample functions were observed at $d=2000$ equispaced points in $(0,1)$. We have also considered two real datasets, the first one being the lip movement data, which is available at \texttt{www.stats.ox.ac.uk/$\sim$silverma/fdacasebook/LipPos.dat}, and contains $32$ sample observations on the movement of the lower lip. The curves are the trajectories traced by the lower lip while pronouncing the word ``bob''. The measurements are taken at $d=501$ time points in a time interval of $700$ milliseconds. The second real dataset is the growth acceleration dataset derived from the well-known Berkeley growth data (see \cite{RS05}), which contains two subclasses, namely, the boys and the girls. Heights of $39$ boys and $54$ girls were measured at $31$ time points between ages $1$ and $18$ years. The growth acceleration curves are obtained through monotone spline smoothing available in the R package ``fda'', and those are recorded at $d=101$ equispaced ages in the interval $[1,18]$. For these functional datasets, we calculated MBD by taking $J = 2$ as suggested in \cite{LPR09}, and $D_{t}$ in the definition of ID was taken to be SD for each $t$, which is equivalent to the depth function used in \cite{FM01}. \\
\indent As shown in the dotplots in Figures \ref{Fig:2} and \ref{Fig:3}, for all of the above simulated and real data, the distributions of empirical ID, MBD, MHRD and SD are well spread out. Empirical BD and HRD are both degenerate at zero for the Brownian motion and the geometric Brownian motion (see Figure \ref{Fig:2}). For the fractional Brownian motion, the maximum value of empirical BD was $0.024$, with its median $= 0$ and the third quartile $= 0.004$, whereas the maximum value of empirical HRD was $0.020$ with its third quartile $= 0$ (see Figure \ref{Fig:2}). For the lip movement data, the empirical HRD is degenerate at zero, while the maximum value of empirical BD is $0.006$ with its third quartile $= 0$ (see Figure \ref{Fig:3}). For the growth acceleration data, the HRD again turns out to be degenerate at zero, while BD takes a maximum value of $0.004$ for boys and $0.008$ for girls, and the third quartile for BD $= 0$ for boys as well as girls (see Figure \ref{Fig:3}). \\
\indent For the growth acceleration data, Fig. \ref{Fig:5} shows the dotplots for the differences between the two depth values with respect to the empirical distributions of the boys and the girls based on SD, MHRD, MBD and ID. The value of this difference for a data point corresponding to a boy (respectively, a girl) is plotted in the panel with heading ``Boys'' (respectively, ``Girls''). For SD, MBD and ID, most of the data points corresponding to the boys have higher depth values with respect to the empirical distribution of the boys than with respect to the empirical distribution of the girls. On the other hand, most of the data points corresponding to the girls have higher depth values with respect to the empirical distribution of the girls. This implies that each of ID, MBD and SD adequately discriminates between the two samples, and depth-based classifiers (see, e.g., \cite{GC05a} and \cite{LCAL12}) constructed using ID, MBD or SD will perform well for this dataset. However, the plot corresponding to 
MHRD shows that a large number of data points in the sample of boys have higher depth values with respect to the empirical distribution of the girls, and almost half of the data points in the sample of girls have higher depth values with respect to the empirical distribution of the boys. This indicates that MHRD does not discriminate well between the two samples.

\begin{figure}[htbp]
\begin{center}
\includegraphics[height=4.6in,width=3.6in,angle=270]{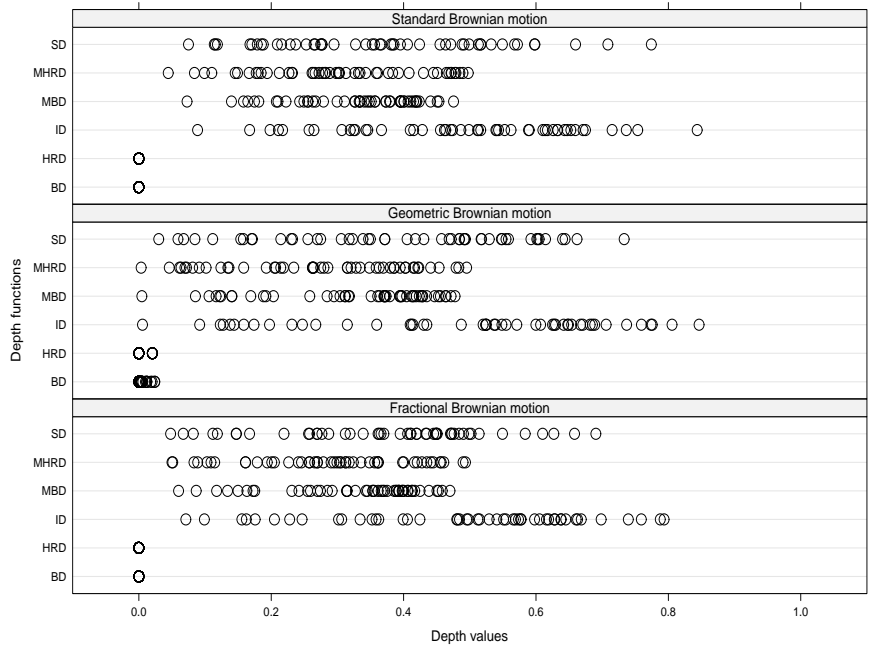} 
\end{center}
\caption{{\it Dotplots of SD, MHRD, MBD, ID, HRD and BD for simulated standard Brownian motion, geometric Brownian motion and fractional Brownian motion.}}
\label{Fig:2}
\end{figure}

\begin{figure}[htbp]
\begin{center}
\includegraphics[height=4.6in,width=3.6in,angle=270]{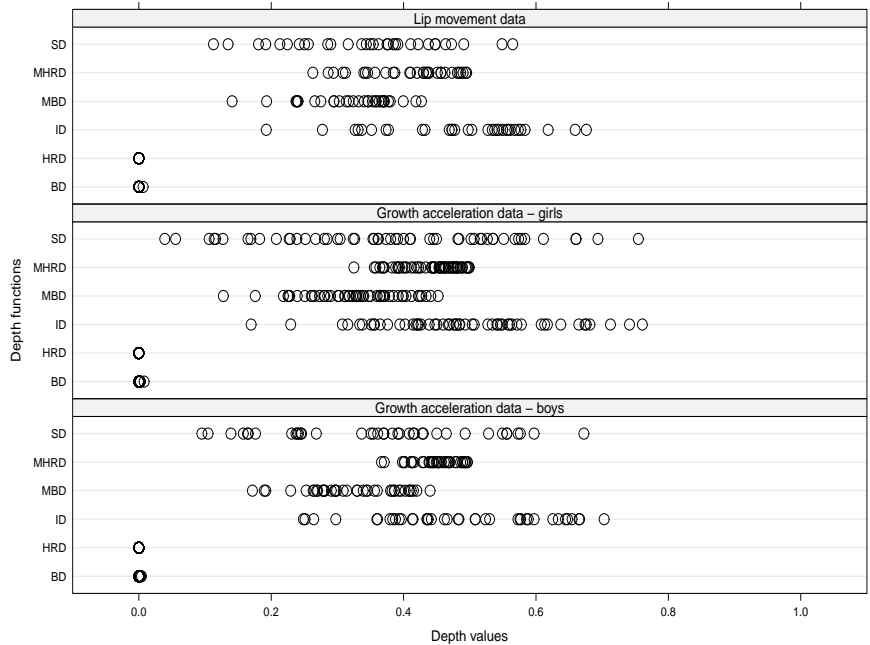} 
\end{center}
\caption{{\it Dotplots of SD, MHRD, MBD, ID, HRD and BD for lip movement data and growth acceleration data.}}
\label{Fig:3}
\end{figure}

\begin{figure}[htbp]
\begin{center}
\includegraphics[height=4.3in,width=3.1in,angle=270]{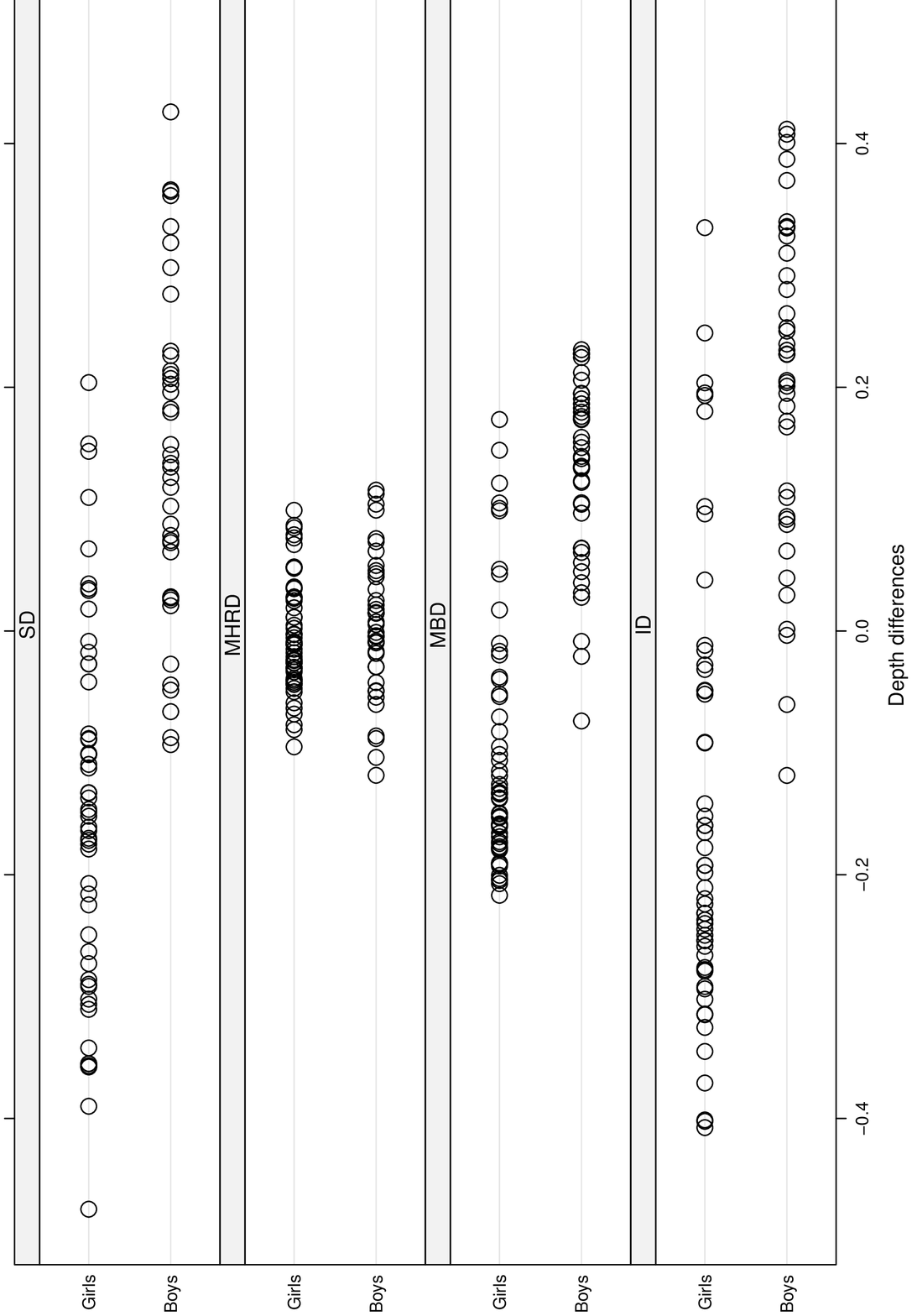} 
\end{center}
\caption{{\it Dotplots for depth differences based on SD, MHRD, MBD and ID for growth accceleration data. The horizantal axis corresponds to the difference  between empirical depth values of each data point with respect to the boys and the girls.}}
\label{Fig:5}
\end{figure}

\section{Technical details}
\label{sec:5} 
\begin{proof}\textit{of Theorem \ref{thm2.1}} \ \ 
Let ${\bf X}(d) = (X_{1},X_{2},\ldots,X_{d})'$ and ${\bf Y}(d) = (Y_{1},Y_{2},\ldots,Y_{d})'$ be $d$-dimensional column vectors that consist of the first $d$ coordinates of the sequences ${\bf X}$ and ${\bf Y}$. Observe that ${\bf Y}(d) = T_{d}({\bf X}(d))$, where $T_{d} : \mathbb{R}^{d} \rightarrow \mathbb{R}^{d}$ is a bijective affine map. By definition, the half-space depth of a point ${\bf x} \in l_{2}$ relative to the distribution of ${\bf X}$ will satisfy
\begin{eqnarray}
HD({\bf x}) &=& \inf_{{\bf u} \in l_{2}} P(\langle{\bf u},{\bf X} - {\bf x}\rangle \geq 0) \ \leq \ \inf_{d \geq 1} \inf_{{\bf v} \in \mathbb{R}^{d}} \ P({\bf v}'{\bf X}(d) \geq {\bf v}'{\bf x}(d)) \nonumber \\ 
&=& \inf_{d \geq 1} \inf_{{\bf v} \in \mathbb{R}^{d}} \ P({\bf v}'{\bf Y}(d) \geq {\bf v}'{\bf y}(d))   \nonumber \\
&\leq& \inf_{d \geq 1} \inf_{{\bf v} \in \mathbb{R}^{d} : {\bf v}'{\bf y}(d) > 0} \ P({\bf v}'{\bf Y}(d) \geq {\bf v}'{\bf y}(d)),   \label{eq2.1.1}
\end{eqnarray}
where ${\bf x}(d) = (x_{1},x_{2},\ldots,x_{d})'$ is the vector of first $d$ coordinates of ${\bf x}$, ${\bf y}(d) = (y_{1},y_{2},\ldots,y_{d})' \newline = T_{d}({\bf x}(d))$ and ${\bf v} = (v_{1},v_{2},\ldots,v_{d})'$. Throughout this section, any finite dimensional vector will be a column vector, and $'$ will denote its transpose. Since $Y_{1},Y_{2},\ldots,Y_{d}$ are uncorrelated, it follows from (\ref{eq2.1.1}) and Chebyshev inequality that 
\begin{eqnarray}
HD({\bf x}) &\leq&  \inf_{d \geq 1} \inf_{{\bf v} : {\bf v}'{\bf y}(d) > 0} \  \frac{Var({\bf v}'{\bf Y}(d))}{({\bf v}'{\bf y}(d))^{2}} \ = \ \inf_{d \geq 1} \inf_{{\bf v} : {\bf v}'{\bf y}(d) > 0} \ \frac{\sum_{k=1}^{d} v_{k}^{2}\tau_{k}^{2}}{\left[\sum_{k=1}^{d} v_{k}y_{k}\right]^{2}}.   \label{eq2.1.2}
\end{eqnarray}
(\ref{eq2.1.2}) implies, by an application of Cauchy-Schwarz inequality, that
\begin{eqnarray}
 HD({\bf x}) \leq \inf_{d \geq 1} \ \left[\sum_{k=1}^{d} y_{k}^{2}/\tau_{k}^{2}\right]^{-1}.   \label{eq2.1.3}
\end{eqnarray}
\indent In view of the moment and the mixing conditions assumed on the $Y_{k}$'s in the theorem, it follows from Corollary 4 in \cite{Hans91} that
\begin{eqnarray}
d^{-1} \sum_{k=1}^{d} Y_{k}^{2}/\tau_{k}^{2} \rightarrow 1 \ \mbox{a.s.} \ \Rightarrow \  \inf_{d \geq 1} \ \left[\sum_{k=1}^{d} Y_{k}^{2}/\tau_{k}^{2}\right]^{-1} = 0 \ \mbox{a.s.}    \label{eq2.1.4}
\end{eqnarray}
(\ref{eq2.1.3}) and (\ref{eq2.1.4}) imply that $HD({\bf x}) = 0$ for all ${\bf x}$ in a subset of $l_{2}$ with $\mu$-measure one. \\
\indent Next, using the definition of PD and arguments similar to those used above, we get that for any ${\bf x} \in l_{2}$,
\begin{eqnarray}
\frac{1 - PD({\bf x})}{PD({\bf x})} \ = \ \sup_{{\bf u} \in l_{2}} \frac{|\langle{\bf u},{\bf x}\rangle - E(\langle{\bf u},{\bf X}\rangle)|}{\sqrt{Var(\langle{\bf u},{\bf X}\rangle)}}  &\geq& \sup_{d \geq 1} \sup_{{\bf v} \in \mathbb{R}^{d}} \frac{|{\bf v}'{\bf x}(d) - E({\bf v}'{\bf X}(d))|}{\sqrt{Var({\bf v}'{\bf X}(d))}}   \nonumber \\
\hspace{1cm} \geq \ \sup_{d \geq 1} \sup_{{\bf v} \in \mathbb{R}^{d}} \frac{|{\bf v}'{\bf y}(d)|}{\sqrt{Var({\bf v}'{\bf Y}(d))}} &\geq& \sup_{d \geq 1} \sup_{{\bf v} \in \mathbb{R}^{d}} \frac{\left|\sum_{k=1}^{d} v_{k}y_{k}\right|}{\sqrt{\sum_{k=1}^{d} v_{k}^{2}\tau_{k}^{2}}} \nonumber  \\
&=& \sup_{d \geq 1} \sum_{k=1}^{d} \frac{y_{k}^{2}}{\tau_{k}^{2}}.  \label{eq2.1.5}
\end{eqnarray}
As in the case of HD, in view of the moment and the mixing conditions on the $Y_{k}$'s assumed in the theorem, (\ref{eq2.1.4}) and (\ref{eq2.1.5}) now imply that $PD({\bf x}) = 0$ for all ${\bf x}$ in a subset of $l_{2}$ with $\mu$-measure one.    
\end{proof}
\vspace{-0.1in}
\begin{proof}{\it of Theorem \ref{thm2.2}} \ \
Let us denote the dual space of $C[0,1]$ by ${\cal M}[0,1]$. Consider the measure ${\bf u}_{d} \in {\cal M}[0,1]$, which assigns point mass $v_{k}$ at $k/d$, $k = 1,2,\ldots,d$. So, we have ${\bf u}_{d}({\bf x}) = \sum_{k=1}^{d} v_{k}x_{k/d}$ for any ${\bf x} = \{x_{t}\}_{t \in [0,1]} \in C[0,1]$. Let ${\bf v} = (v_{1},v_{2},\ldots,v_{d})'$, ${\bf X}_{d} = (X_{1/d},X_{2/d},\ldots,X_{d/d})'$ and ${\bf x}_{d} = (x_{1/d},x_{2/d},\ldots,x_{d/d})'$. For each $d \geq 1$, define $Y_{d,1} = X_{1/d} - E(X_{1/d})$, and let $Y_{d,k}$ denote the residual of linear regression of $X_{k/d}$ on $(X_{1/d},X_{2/d},\ldots,X_{(k-1)/d})$ for $k = 2,3,\ldots,d$. Then, ${\bf Y}_{d} = (Y_{d,1},Y_{d,2},\ldots,Y_{d,k})'$ has a multivariate Gaussian distribution with independent components in view of the Gaussian distribution of ${\bf X}$. The proof now follows by straightforward modification of the arguments used in the proof of Theorem \ref{thm2.1} and using ${\bf Y}_{d}$ in place of ${\bf Y}(d)$.  
\end{proof}
\vspace{-0.1in}
\begin{proof}{\it of Theorem \ref{thm3.1}} \ \
Let $\widetilde{{\bf X}} = (\widetilde{X}_{1},\widetilde{X}_{2},\ldots)$ and ${\bf X}_{i} = (X_{i,1},X_{i,2},\ldots)$, $i=1,2,\ldots,J$, be independent copies of ${\bf X}$. We first note that $BD({\bf x}) = HRD({\bf x}) = 0$ with probability one iff $E\{BD(\widetilde{{\bf X}})\} = E\{HRD(\widetilde{{\bf X}})\} = 0$. Let us first consider the case of BD. Note that $E\{BD(\widetilde{{\bf X}})\} = \sum_{j=2}^{J} P(\min_{1 \leq i \leq j} X_{i,k} \leq \widetilde{X}_{k} \leq \max_{1 \leq i \leq j} X_{i,k}, \ \forall \ k \geq 1)$. So, $E\{BD(\widetilde{{\bf X}})\} = 0$ iff $P(\min_{1 \leq i \leq j} X_{i,k} \leq \widetilde{X}_{k} \leq \max_{1 \leq i \leq j} X_{i,k}, \ \forall \ k \geq 1) = 0$ for all $2 \leq j \leq J$. Consequently, it is enough to show that for any $2 \leq j \leq J$, the event $\{\min_{1 \leq i \leq j} X_{i,k} > \widetilde{X}_{k}\} \cup \{\max_{1 \leq i \leq j} X_{i,k} < \widetilde{X}_{k}\}$ occurs for some $k \geq 1$ with probability one. Now, the sequence $(\min_{1 \leq i \leq j} X_{i,1} - \widetilde{X}_{1},\min_{1 \leq i \leq j} X_{i,2} - \widetilde{X}_{2},\ldots)$ is $\alpha$-mixing for any $1 \leq j \leq J$, and its mixing coefficients satisfy the conditions assumed in the theorem. On the other hand, $P(\min_{1 \leq i \leq j} X_{i,k} > \widetilde{X}_{k}) = 2^{-j}$ for all $k \geq 1$, by the continuity of the distributions of the $X_{k}$'s. So, using Corollary 4 in \cite{Hans91}, we have $d^{-1} \sum_{k=1}^{d} I(\min_{1 \leq i \leq j} X_{i,k} > \widetilde{X}_{k}) \rightarrow 2^{-j}$ as $d \rightarrow \infty$ with probability one for all $1 \leq j \leq J$. So, the event $\{\min_{1 \leq i \leq j} X_{i,k} > \widetilde{X}_{k}\}$ actually occurs for infinitely many $k \geq 1$ with probability one. Thus, $BD({\bf x}) = 0$ for all ${\bf x}$ in a subset of $l_{2}$ with $\mu$-measure one. \\
\indent The proof for HRD follows by taking $j = 1$, and we skip further details.  
\end{proof}
\vspace{-0.1in}
\begin{lemma} \label{lem6.4}
Let $\{X_{t}\}_{t \in [0,1]}$ be a Feller processes in $C[0,1]$ satisfying the conditions of Theorem \ref{thm3.2}. Let ${\bf X}_{i} = \{X_{i,t}\}_{t \in [0,1]}$, $i = 1,2,\ldots,J$, denote independent copies of ${\bf X}$, and define $T_{j} = \inf\{t > 0 : \min_{1 \leq i \leq j} X_{i,t} > x_{0}\}$ and $S_{j} = \inf\{t > 0 : \max_{1 \leq i \leq j} X_{i,t} < x_{0}\}$ for $1 \leq j \leq J$. Then, $P(T_{j} = 0) = P(S_{j} = 0) = 1$ for all $1 \leq j \leq J$. 
\end{lemma}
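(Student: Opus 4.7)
The plan is to invoke Blumenthal's 0-1 law for the joint Feller process $(X_{1},\ldots,X_{j})$ on $\mathbb{R}^{j}$, and then use a reverse-Fatou argument together with the symmetry hypothesis to certify that the 0-1 alternative falls on the side of probability one.

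First, since $X_{1},\ldots,X_{j}$ are independent copies of a Feller process on $\mathbb{R}$, the product $(X_{1},\ldots,X_{j})$ is itself a Feller process on $\mathbb{R}^{j}$, started deterministically at the point $(x_{0},\ldots,x_{0})$. Consequently, the germ $\sigma$-field $\mathcal{F}_{0+}$ of its natural filtration is $P$-trivial. I would then verify that $\{T_{j}=0\}\in\mathcal{F}_{0+}$: by continuity of sample paths,
\[
\{T_{j}=0\}=\bigcap_{n\geq 1}\ \bigcup_{t\in\mathbb{Q}\cap(0,1/n)}\Big\{\min_{1\leq i\leq j}X_{i,t}>x_{0}\Big\},
\]
and the event inside the outer intersection is $\mathcal{F}_{1/n}$-measurable for each $n$. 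Hence Blumenthal's 0-1 law gives $P(T_{j}=0)\in\{0,1\}$.

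To rule out the value zero, define $A_{n}=\{\min_{1\leq i\leq j}X_{i,1/n}>x_{0}\}$. By the non-atomicity and symmetry of $X_{1/n}$ about $x_{0}$, $P(X_{1/n}>x_{0})=1/2$; independence of the $X_{i}$'s then yields $P(A_{n})=2^{-j}$ for every $n\geq 1$. The reverse Fatou inequality for sets gives
\[
P\big(\limsup_{n} A_{n}\big)\ \geq\ \limsup_{n}P(A_{n})\ =\ 2^{-j}\ >\ 0.
\]
On the event $\limsup_{n} A_{n}$ there is a subsequence $t_{n_{k}}=1/n_{k}\downarrow 0$ along which $\min_{i}X_{i,t_{n_{k}}}>x_{0}$, so $T_{j}=0$ on this event. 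Combined with the 0-1 alternative, $P(T_{j}=0)=1$. The argument for $S_{j}$ is identical with $B_{n}=\{\max_{1\leq i\leq j}X_{i,1/n}<x_{0}\}$ in place of $A_{n}$, using that symmetry about $x_{0}$ also yields $P(B_{n})=2^{-j}$.

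The main obstacle is conceptual rather than computational: one has to recognise that the independent product of Feller processes inherits the Feller property and hence that Blumenthal's 0-1 law is available for the joint process started at the deterministic state $(x_{0},\ldots,x_{0})$. Once that is in place, the measurability check for $\{T_{j}=0\}$, the reverse-Fatou estimate, and the $2^{-j}$ computation via symmetry are all routine.
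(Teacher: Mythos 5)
Your proof is correct and follows essentially the same route as the paper: both establish $P(T_{j}=0)\geq 2^{-j}$ via the symmetry and non-atomicity of the marginals (the paper through $P(T_{j}\leq s)\geq 2^{-j}$ for all $s>0$, you through reverse Fatou on the events $A_{n}$), and both then invoke Blumenthal's 0--1 law for the joint Feller process (the paper cites Propositions 2.16 and 2.17 of Revuz and Yor for exactly this). The only difference is cosmetic.
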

\vspace{-0.1in}
\begin{proof}
Consider the multivariate Feller process $\{(X_{1,t},X_{2,t},\ldots,X_{j,t})\}_{t \in [0,1]}$, where $1 \leq j \leq J$. Since, $P(T_{j} \leq t) \geq P(\min_{1 \leq i \leq j} X_{i,t} > x_{0}) = 2^{-j}$ and $P(S_{j} \leq t) \geq P(\max_{1 \leq i \leq j} X_{i,t} < x_{0}) = 2^{-j}$ for every $t > 0$, we have
\begin{eqnarray}
P(T_{j} = 0) = \lim_{s \downarrow 0} P(T_{j} \leq s) \geq 2^{-j} \ \ \mbox{and} \ \ P(S_{j} = 0) = \lim_{s \downarrow 0} P(S_{j} \leq s) \geq 2^{-j}.   \label{eq6.4.1}
\end{eqnarray}   
From the continuity of the sample paths of the processes, and using Propositions 2.16 and 2.17 in \cite{RY91}, it follows that $P(T_{j} = 0) = 0$ or $1$ and $P(S_{j} = 0) = 0$ or $1$ for all $1 \leq j \leq J$. The proof is now complete using (\ref{eq6.4.1}).  
\end{proof}
\vspace{-0.1in}
\begin{lemma} \label{lem6.5}
Let $\{X_{t}\}_{t \in [0,1]}$ be a Feller process on $C[0,1]$ satisfying the conditions of Theorem \ref{thm3.2}. Also, let ${\bf f} = \{f_{t}\}_{t \in [0,1]} \in C[0,1]$ be such that $f_{0} = x_{0}$ and $f_{t} - x_{0}$ changes sign infinitely often in any right neighbourhood of zero. Then, $P(T = 0) = P(S = 0) = 1$, where $T = \inf\{t > 0 : X_{t} - f_{t} > 0\}$ and $S = \inf\{t > 0 : X_{t} - f_{t} < 0\}$.
\end{lemma}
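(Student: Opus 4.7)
The plan is to parallel the proof of Lemma \ref{lem6.4}, using the oscillation of $f_t - x_0$ around zero to reduce the asymmetric comparison $X_t$ vs.\ $f_t$ to the symmetric comparison $X_t$ vs.\ $x_0$. The two ingredients will be a cheap lower bound $P(T \le t) \ge 1/2$ for every $t > 0$, followed by a Blumenthal-type $0$--$1$ argument to promote this to $P(T = 0) = 1$, and symmetrically for $S$.

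First, using the hypothesis that $f_t - x_0$ changes sign infinitely often in any right neighbourhood of $0$, for every $n \ge 1$ I would pick $t_n \in (0, 1/n]$ with $f_{t_n} < x_0$ and $s_n \in (0, 1/n]$ with $f_{s_n} > x_0$. Since the marginal $X_{t_n}$ is symmetric about $x_0$ and non-atomic, $P(X_{t_n} > x_0) = 1/2$. The inclusions $\{X_{t_n} > x_0\} \subseteq \{X_{t_n} > f_{t_n}\} \subseteq \{T \le 1/n\}$ (the first using $f_{t_n} < x_0$) then give $P(T \le 1/n) \ge 1/2$ for every $n$, and hence $P(T = 0) = \lim_{n \to \infty} P(T \le 1/n) \ge 1/2$. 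The same reasoning with $s_n$ in place of $t_n$ and the symmetric marginal identity $P(X_{s_n} < x_0) = 1/2$ yields $P(S = 0) \ge 1/2$.

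Next I would observe that $\{T = 0\}$ lies in the germ $\sigma$-field $\mathcal{F}_{0+} = \bigcap_{t > 0} \mathcal{F}_t$: by joint continuity of $X$ and $f$, one has $\{T \le 1/n\} = \{\sup_{s \in (0, 1/n]} (X_s - f_s) > 0\} \in \mathcal{F}_{1/n}$, and intersecting over $n$ places $\{T = 0\}$ in $\mathcal{F}_{0+}$; the same applies to $\{S = 0\}$. Blumenthal's $0$--$1$ law for Feller processes (Propositions 2.16 and 2.17 in \cite{RY91}, exactly as invoked in Lemma \ref{lem6.4}) then forces $P(T = 0), P(S = 0) \in \{0, 1\}$, and combined with the $\ge 1/2$ bound from the previous step this gives $P(T = 0) = P(S = 0) = 1$.

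The only point that is not entirely mechanical is the verification that $\{T = 0\}$ and $\{S = 0\}$ lie in $\mathcal{F}_{0+}$ so that the $0$--$1$ law applies; this rests on the continuity of $f$ together with the continuity of sample paths of $X$, which makes the suprema above reducible to countable suprema over rationals. I expect this to be a short check rather than a genuine obstacle, so the proof should be essentially a two-line symmetry argument followed by one appeal to Blumenthal's law.
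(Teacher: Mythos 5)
Your proposal is correct and follows essentially the same route as the paper: the paper likewise picks, for each $t>0$, a point $r\in(0,t)$ with $f_{r}<x_{0}$ to get $P(T\leq t)\geq P(X_{r}>x_{0})=1/2$ via the symmetric non-atomic marginal, and then invokes the Blumenthal-type $0$--$1$ argument of Lemma \ref{lem6.4} applied to the Feller process $\{X_{t}-f_{t}\}_{t\in[0,1]}$ started at $0$. Your extra remark verifying that $\{T=0\}$ and $\{S=0\}$ lie in the germ $\sigma$-field is a detail the paper leaves implicit, but it is the same argument.
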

\begin{proof}
For any $t > 0$, let $0 < r < t$ be such that $f_{r} < x_{0}$. Then, $P(T \leq t) \geq P(T \leq r) \geq P(X_{r} > f_{r}) \geq P(X_{r} > x_{0}) = 1/2$. Now, arguing as in the proof of Lemma \ref{lem6.4}, we get that $P(T = 0) = 1$ since $\{X_{t}-f_{t}\}_{t \in [0,1]}$ is a Feller proces staring at $0$. Next, let $0 < s < t$ be such that $f_{s} > x_{0}$. By similar arguments, we get that $P(S = 0) = 1$.   
\end{proof}
\vspace{-0.1in}
\begin{proof}{\it of Theorem \ref{thm3.2}} \ \ 
We first prove the result for BD using similar ideas as in the proof of Theorem \ref{thm3.1}. From the definition of BD in (\ref{s3.1}), we have 
\begin{eqnarray}
E\{BD(\widetilde{{\bf X}})\} &=& \sum_{j=2}^{J} P\left(\min_{1 \leq i \leq j} X_{i,t} \leq \widetilde{X}_{t} \leq \max_{1 \leq i \leq j} X_{i,t}, \ \forall \ t \in [0,1]\right)   \nonumber \\
&\leq& \sum_{j=2}^{J} P\left(\min_{1 \leq i \leq j} X_{i,t} \leq \widetilde{X}_{t}, \ \forall \ t \in [0,1]\right)  \nonumber \\
&=& \sum_{j=2}^{J} \left. E\left\{P\left(\min_{1 \leq i \leq j} X_{i,t} \leq \widetilde{X}_{t}, \ \forall \ t \in [0,1] ~ \right|~ {\bf X}_{1}, {\bf X}_{2}, \ldots, {\bf X}_{J} \right) \right\}.    \label{eq3.2.1}
\end{eqnarray}
For any fixed $j$, let ${\bf z} = \{z_{t}\}_{t \in [0,1]}$ be a realization of the process $\{\min_{1 \leq i \leq j} X_{i,t}\}_{t \in [0,1]}$. Then, from Lemma \ref{lem6.4}, it follows that ${\bf z}$ satisfies, with probability one, the assumptions made on the function ${\bf f}$ in Lemma \ref{lem6.5}. So, using Lemma \ref{lem6.5}, we have $P(z_{t} \leq \widetilde{X}_{t}, \ \forall \ t \in [0,1]) = 0$ for all ${\bf z}$ in a set of probability one. Hence, the expectation in (\ref{eq3.2.1}) is zero, which implies that $E\{BD(\widetilde{{\bf X}})\} = 0$. Thus, $BD({\bf x}) = 0$ on a set of $\mu$-measure one. \\
\indent The proof for HRD follows by taking ${\bf z}$ to be a realization of the process ${\bf X}$, and using Lemma \ref{lem6.4} and similar arguments as above.  
\end{proof}
\vspace{-0.1in}
\begin{lemma} \label{lem6.1}
Let ${\bf G}$ be the map on $C[0,1]$ defined as ${\bf G}({\bf f}) = \{g(t,f_{t})\}_{t \in [0,1]}$, where ${\bf f} = \{f_{t}\}_{t \in [0,1]} \in C[0,1]$ and $g : [0,1] \times \mathbb{R} \rightarrow \mathbb{R}$ is continuous. Then, ${\bf G}$ is a continuous map from $C[0,1]$ into $C[0,1]$.
\end{lemma}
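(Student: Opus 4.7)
The plan is a standard compactness-plus-uniform-continuity argument. First I would verify that ${\bf G}({\bf f})$ actually lies in $C[0,1]$ whenever ${\bf f} \in C[0,1]$: the map $t \mapsto (t,f_{t})$ is continuous from $[0,1]$ to $[0,1] \times \mathbb{R}$, and composing it with the continuous map $g$ yields a continuous function $t \mapsto g(t,f_{t})$, so ${\bf G}({\bf f}) \in C[0,1]$.

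For the continuity of ${\bf G}$, I would take an arbitrary ${\bf f} \in C[0,1]$ and a sequence ${\bf f}_{n} \to {\bf f}$ in $C[0,1]$ (i.e., uniformly on $[0,1]$), and show $\|{\bf G}({\bf f}_{n}) - {\bf G}({\bf f})\|_{\infty} \to 0$. The key observation is that by uniform convergence there exists $M > 0$ such that $|f_{n,t}| \leq M$ and $|f_{t}| \leq M$ for all $n \geq 1$ and all $t \in [0,1]$. Hence the graphs $\{(t,f_{n,t}) : t \in [0,1]\}$ and $\{(t,f_{t}) : t \in [0,1]\}$ all lie in the compact rectangle $[0,1] \times [-M,M]$.

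Now I would invoke the fact that $g$, being continuous on the compact set $[0,1] \times [-M,M]$, is uniformly continuous there. Given $\varepsilon > 0$, choose $\delta > 0$ such that $|g(t,s) - g(t',s')| < \varepsilon$ whenever $(t,s),(t',s') \in [0,1] \times [-M,M]$ satisfy $|t - t'| + |s - s'| < \delta$. Using uniform convergence, pick $N$ so large that $\sup_{t \in [0,1]} |f_{n,t} - f_{t}| < \delta$ for all $n \geq N$. Then for every $t \in [0,1]$ and every $n \geq N$, applying the uniform continuity estimate with $(t,f_{n,t})$ and $(t,f_{t})$ gives $|g(t,f_{n,t}) - g(t,f_{t})| < \varepsilon$, whence $\|{\bf G}({\bf f}_{n}) - {\bf G}({\bf f})\|_{\infty} \leq \varepsilon$ for all $n \geq N$.

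There is no real obstacle here; the statement is essentially the continuity of the Nemytskii-type operator induced by $g$ on $C[0,1]$. The only subtlety worth stating explicitly is the uniform bound on the iterates $f_{n,t}$, which is what lets us trap everything inside a single compact set on which $g$ is uniformly continuous — otherwise pointwise continuity of $g$ alone would not be enough to convert uniform smallness of $f_{n,t} - f_{t}$ into uniform smallness of $g(t,f_{n,t}) - g(t,f_{t})$.
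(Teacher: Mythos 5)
Your proposal is correct and follows essentially the same route as the paper: verify that ${\bf G}({\bf f}) \in C[0,1]$ by composing continuous maps, then use the uniform continuity of $g$ on a compact rectangle $[0,1] \times [-M,M]$ containing the graphs of ${\bf f}$ and all ${\bf f}_{n}$ to convert uniform convergence of ${\bf f}_{n}$ into uniform convergence of ${\bf G}({\bf f}_{n})$. In fact you spell out the uniform bound $M$ and the $\varepsilon$--$\delta$ bookkeeping more explicitly than the paper does, which leaves these details implicit.
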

\begin{proof}
Let $t_{n} \rightarrow t$ in $[0,1]$ as $n \rightarrow \infty$. By the continuity of $g$, and the fact that ${\bf f} = \{f_{t}\}_{t \in [0,1]} \in C[0,1]$, we have $g(t_{n},f_{t_{n}}) \rightarrow g(t,f_{t})$ as $n \rightarrow \infty$. This shows that ${\bf G}$ maps $C[0,1]$ into $C[0,1]$. Let us now fix $\epsilon > 0$, $t \in [0,1]$ and ${\bf f} \in C[0,1]$. Consider a sequence of functions ${\bf f}_{n} = \{f_{n,t}\}_{t \in [0,1]}$ in $C[0,1]$ such that $||{\bf f}_{n} - {\bf f}|| \rightarrow 0$ as $n \rightarrow \infty$. Note that the function $g$ is uniformly continuous on $[0,1] \times I$, where $I$ is any compact interval of the real line. Thus, $\sup_{t \in [0,1]}|g(t,f_{n,t}) - g(t,f_{t})| \rightarrow 0$, and this proves the continuity of ${\bf G}$. 
\end{proof}
\vspace{-0.1in}
\begin{proof}{\it of Theorem \ref{thm3.3}} \ \ 
(a) Since the process ${\bf Y} = \{Y_{t}\}_{t \in [0,1]}$ has almost surely continuous sample paths, Lemma \ref{lem6.1} implies that the sample paths of the process ${\bf X} = {\bf G}({\bf Y})$ also lie in $C[0,1]$ almost surely. Consider now ${\bf x}_{p} = {\bf G}({\bf y}_{p})$, where $p \in (0,1)$ and ${\bf y}_{p} = \{F_{t}^{-1}(p)\}_{t \in [0,1]}$. Note that the distribution $F_{t}$ of $Y_{t}$ is Gaussian for all $t \in (0,1]$ with zero mean and variance $\sigma_{t}^{2}$ (say), which is a continuous function in $t$. So, $F_{t}^{-1}(p) = \sigma_{t}\Phi^{-1}(\zeta_{p})$, where $\Phi$ and $\zeta_{p}$ denote the distribution function and the $p$th quantile of the standard normal variable, respectively. Hence, ${\bf y}_{p} \in C[0,1]$, and in view of Lemma \ref{lem6.1}, we have ${\bf x}_{p} = {\bf G}({\bf y}_{p}) \in C[0,1]$.  \\
\indent Note that by strict monotonicity of $g(t,.)$ for all $t \in [0,1]$, we have $MBD({\bf x}_{p}) = \sum_{j=2}^{J} [1 - p^{j} - (1-p)^{j}]$, $MHRD({\bf x}_{p}) = \min(p,1-p)$ and $ID({\bf x}_{p}) = \psi(p)$. These depth functions are bounded above by $A_{J} = J - 2 + 2^{-J+1}$, $1/2$ and $\sup_{s \in (0,1)} \psi(s)$, respectively, where the upper bounds are attained in MBD and MHRD iff $p = 1/2$.  Let us now write $C_{y_{0}}[0,1] = \{{\bf f} = \{f_{t}\}_{t \in [0,1]} \in C[0,1] : f_{0} = y_{0}\}$, and define $H_{0} = {\bf G}(C_{y_{0}}[0,1]) = \{{\bf G}({\bf f}) : {\bf f} \in C_{y_{0}}[0,1]\}$. Since ${\bf x}_{p} \in H_{0}$, we have $MBD(H_{0}) = \{MBD({\bf x}) : {\bf x} \in H_{0}\} = (0,A_{J}]$, $MHRD(H_{0}) = (0,1/2]$ and $ID(H_{0}) = \psi((0,1))$ by varying $p \in (0,1)$. This completes the proof of part (a). \\
\indent (b) It follows from the proof of Proposition 5.1 in \cite{Guas06} that the support of a fractional Brownian motion, say $\{Z_{t}\}_{t \in [0,1]}$, starting at zero is the whole of $C_{0}[0,1]$. Since the distribution of $\{Y_{t}\}_{t \in [0,1]}$ is same as that of $\{Z_{t} + y_{0}\}_{t \in [0,1]}$, the support of the distribution of $\{Y_{t}\}_{t \in [0,1]}$ is the whole of $C_{y_{0}}[0,1]$. By continuity of ${\bf G}$ proved in Lemma \ref{lem6.1}, any point in $H_{0}$ is a support point of the distribution of $\widetilde{{\bf X}}$. On the other hand, for every fixed $t \in [0,1]$, since $g(t,.)$ is a continuous strictly monotone function, and the distribution of $Y_{t}$ is continuous, it follows that the distribution of $X_{t}$ is continuous. So, using the dominated convergence theorem, we get that  MBD, MHRD and ID are continuous functions on $C[0,1]$. This and the fact that any point in $H_{0}$ is a support point of the distribution of $\widetilde{{\bf X}}$ completes the proof of part (b). \\
\indent (c) If $\{Y_{t}\}_{t \in [0,1]}$ is a fractional Brownian bridge ``tied'' down to $b_{0}$ at $t = 1$ (say), then it has the same distribution as that of $\{Z_{t} - Cov(Z_{t},Z_{1})(Z_{1} - b_{0})\}_{t \in [0,1]}$. So, the support of $\{Y_{t}\}_{t \in [0,1]}$ is the set $\{{\bf f} = \{f_{t}\}_{t \in [0,1]} \in C_{y_{0}}[0,1] : f_{1} = b_{0}\}$. The proof now follows from arguments similar to those in parts (a) and (b).   
\end{proof}
\vspace{-0.1in}
\begin{remark}  \label{rem6.1}
It follows from the proof of Proposition $5.1$ in \cite{Guas06} that a fractional Brownian motion $\{Y_{t}\}_{t \in [0,1]}$ starting at $y_{0}$ has as its support as the whole of $C_{y_{0}}[0,1]$, which implies that the support of $\{Y_{t}\}_{t \in [t_{0},1]}$ is the whole of $C[t_{0},1]$ for any $t_{0} > 0$. Consequently, if MBD, MHRD and ID are computed based on the distribution of $\{X_{t}\}_{t \in [t_{0},1]}$, the supports of the distributions of $MBD(\widetilde{{\bf X}})$, $MHRD(\widetilde{{\bf X}})$ and $ID(\widetilde{{\bf X}})$ will be $[0,A_{J}]$, $[0,1/2]$ and the closure of $\psi((0,1))$, respectively.
\end{remark}
\vspace{-0.1in}
\begin{proof}{\it of Theorem \ref{thm4.1}} \ \ 
First, we shall prove that the support of $\widetilde{{\bf X}}$ is the whole of $l_{2}$, where $\widetilde{{\bf X}} = (\widetilde{X}_{1},\widetilde{X}_{2},\ldots)$ is an independent copy of ${\bf X} = (X_{1},X_{2},\ldots)$. For this, let us fix ${\bf x} \in l_{2}$ and $\eta > 0$. Then, there exists $d \geq 1$ satisfying $||{\bf x} - {\bf x}[d]|| < \eta$, where ${\bf x}[d] = (x_{1},x_{2},\ldots,x_{d},0,0,\ldots)$. Further, in view of the assumption on the second moments of the $X_{k}$'s, we can choose $M > d$ such that $\sum_{k > M} E(\widetilde{X}_{k}^{2}) < \eta^{2}/4$. Then,
\begin{eqnarray}
&& P(||\widetilde{{\bf X}} - {\bf x}|| < 2\eta) > P(||\widetilde{{\bf X}} - {\bf x}[d]|| < \eta)   \nonumber \\
&>& P\left(\sum_{k \leq M} (\widetilde{X}_{k} - x_{k})^{2} < \frac{\eta^{2}}{2} ~ \left |~ \sum_{k > M} \widetilde{X}_{k}^{2} < \frac{\eta^{2}}{2}\right) \right. \ P\left(\sum_{k > M} \widetilde{X}_{k}^{2} < \frac{\eta^{2}}{2}\right). \label{eq4.1.1}
\end{eqnarray}
Using Markov inequality, we get 
\begin{eqnarray}
 P\left(\sum_{k > M} \widetilde{X}_{k}^{2} < \frac{\eta^{2}}{2}\right) \ > \ 1 - \frac{\sum_{k > M} E(\widetilde{X}_{k}^{2})}{\eta^{2}/2} \ > \ 1/2.   \label{eq4.1.2}
\end{eqnarray}
(\ref{eq4.1.1}) and (\ref{eq4.1.2}) now imply that
\begin{eqnarray}
P(||\widetilde{{\bf X}} - {\bf x}|| < 2\eta) &>& \frac{1}{2} P\left(\sum_{k \leq M} (\widetilde{X}_{k} - x_{k})^{2} < \frac{\eta^{2}}{2} ~ \left |~ \sum_{k > M} \widetilde{X}_{k}^{2} < \frac{\eta^{2}}{2}\right) \right.. \label{eq4.1.3} 
\end{eqnarray}
From the conditional full support assumption on the $X_{k}$'s, it follows that the expression on the right hand side of the inequality (\ref{eq4.1.3}) is positive for each $\eta > 0$. This implies that ${\bf x}$ lies in the support of $\widetilde{{\bf X}}$.  \\
\indent Since the distribution of ${\bf X}$ is non-atomic, SD is a continuous function on $l_{2}$ as mentioned in Section \ref{sec:3}. Thus, the set $\{SD({\bf x}) : {\bf x} \in l_{2}\}$ is an interval in $[0,1]$. Hence, from the properties of SD discussed in Section \ref{sec:3}, we get that the function SD takes all values in $(0,1]$. This and the continuity of SD together imply that the support of the distribution of $SD(\widetilde{{\bf X}})$ is the whole of $[0,1]$.   
\end{proof}
\vspace{-0.1in}
\begin{lemma} \label{lem6.2}
The set $H_{0} = {\bf G}(C_{y_{0}}[0,1])$ is convex. Here, ${\bf G}$ is as in Lemma \ref{lem6.4} and $C_{y_{0}}[0,1]$ is as in the proof of Theorem \ref{thm3.3}.
\end{lemma}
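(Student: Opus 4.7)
The plan is to prove convexity of $H_{0}$ by direct construction of a preimage under ${\bf G}$. Given two points ${\bf x}_{1} = {\bf G}({\bf f}_{1})$ and ${\bf x}_{2} = {\bf G}({\bf f}_{2})$ with ${\bf f}_{1}, {\bf f}_{2} \in C_{y_{0}}[0,1]$ and $\lambda \in [0,1]$, I would exhibit a function ${\bf f}_{3} \in C_{y_{0}}[0,1]$ whose image under ${\bf G}$ coincides with $\lambda {\bf x}_{1} + (1-\lambda){\bf x}_{2}$. The definition of ${\bf f}_{3}$ would be pointwise in $t$, using the fact that $g(t,\cdot)$ is a continuous strictly increasing function.

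For each $t \in [0,1]$, since $g(t,\cdot)$ is continuous, strictly increasing, and satisfies $g(t,s) \rightarrow \infty$ as $s \rightarrow \infty$, its image is an open interval of the form $(a_{t}, \infty)$, where $a_{t} = \lim_{s \rightarrow -\infty} g(t,s) \in [-\infty, \infty)$. I would set
\begin{equation*}
v_{t} = \lambda\, g(t, f_{1,t}) + (1-\lambda)\, g(t, f_{2,t}),
\end{equation*}
which lies in $(a_{t}, \infty)$ as a convex combination of two points in that interval, and then define $f_{3,t}$ as the unique solution of $g(t,u) = v_{t}$. The boundary condition $f_{3,0} = y_{0}$ is immediate: at $t = 0$ we have $f_{1,0} = f_{2,0} = y_{0}$, hence $v_{0} = g(0, y_{0})$, which forces $f_{3,0} = y_{0}$ by strict monotonicity.

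The main obstacle is to verify that $t \mapsto f_{3,t}$ is continuous, so that ${\bf f}_{3} \in C_{y_{0}}[0,1]$. My strategy is to establish the auxiliary fact that the map $\psi(t,v) := g(t,\cdot)^{-1}(v)$, defined on $\{(t,v) \in [0,1] \times \mathbb{R} : v > a_{t}\}$, is jointly continuous. I would argue by contradiction along a sequence $(t_{n}, v_{n}) \rightarrow (t, v)$: writing $u_{n} = \psi(t_{n}, v_{n})$, if a subsequence satisfies $u_{n_{k}} \rightarrow +\infty$, then for any fixed $M$ strict monotonicity and continuity of $g$ give $v_{n_{k}} \geq g(t_{n_{k}}, M) \rightarrow g(t,M)$, forcing $v = \infty$ upon sending $M \rightarrow \infty$, a contradiction; and a symmetric comparison with $g(t_{n_{k}}, M)$ for $M \rightarrow -\infty$ rules out $u_{n_{k}} \rightarrow -\infty$ using $v > a_{t}$. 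Any finite subsequential limit $u^{*}$ satisfies $g(t, u^{*}) = v$ by continuity of $g$, and hence $u^{*} = \psi(t,v)$ by strict monotonicity, proving $u_{n} \rightarrow \psi(t,v)$.

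Once joint continuity of $\psi$ is in hand, continuity of $t \mapsto f_{3,t}$ follows because $t \mapsto v_{t}$ is continuous (each $t \mapsto g(t, f_{i,t})$ is continuous by Lemma \ref{lem6.1} applied to ${\bf G}({\bf f}_{i}) = {\bf x}_{i} \in C[0,1]$), and $f_{3,t} = \psi(t, v_{t})$. By construction $g(t, f_{3,t}) = \lambda g(t, f_{1,t}) + (1-\lambda) g(t, f_{2,t})$ for every $t$, so ${\bf G}({\bf f}_{3}) = \lambda {\bf x}_{1} + (1-\lambda) {\bf x}_{2}$, yielding the desired convexity of $H_{0}$.
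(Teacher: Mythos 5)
Your proof is correct and follows essentially the same route as the paper: define the preimage pointwise by inverting $g(t,\cdot)$ at the convex combination of the image values, check the boundary condition at $t=0$ via strict monotonicity, and establish continuity of $t \mapsto f_{3,t}$ by a subsequence argument that again exploits the strict monotonicity of $g(t,\cdot)$. The only difference is bookkeeping: the paper confines its preimage $q_{t}$ to the compact interval $[-L,L]$ with $L = \max(||{\bf f}||,||{\bf h}||)$, so boundedness of subsequences is automatic, whereas you rule out escape to $\pm\infty$ by direct monotone comparison with $g(t_{n_{k}},M)$.
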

\begin{proof}
Let us take ${\bf f} = \{f_{t}\}_{t \in [0,1]}$ and ${\bf h} = \{h_{t}\}_{t \in [0,1]} \in C_{y_{0}}[0,1]$. Fix $\lambda \in (0,1)$ and $t \in [0,1]$. Let $L = \max(||{\bf f}||, ||{\bf h}||)$. By continuity of $g(t,.)$, the range of $g(t,s)$ for $s \in [-L,L]$ is a closed and bounded interval, say $[a,b]$. Thus, $\lambda g(t,f_{t}) + (1-\lambda) g(t,h_{t}) \in [a,b]$. Since $g(t,.)$ is continuous and strictly increasing, there is a unique $q_{t} \in [-L,L]$ such that $g(t,q_{t}) = \lambda g(t,f_{t}) + (1-\lambda) g(t,h_{t})$. Now let $t_{n} \rightarrow t \in [0,1]$ as $n \rightarrow \infty$. Since $g(t_{n},q_{t_{n}}) = \lambda g(t_{n},f_{t_{n}}) + (1-\lambda) g(t_{n},h_{t_{n}})$, by continuity of $g$, we have
\begin{eqnarray}
g(t_{n},q_{t_{n}}) \rightarrow \lambda g(t,f_{t}) + (1-\lambda) g(t,h_{t}) = g(t,q_{t})   \label{eq6.2.1}
\end{eqnarray}
as $n \rightarrow \infty$. Suppose now, if possible, $q_{t_{n}} \nrightarrow q_{t}$ as $n \rightarrow \infty$. Then, there exists $\epsilon_{0} > 0$ and a subsequence $\{t_{n_{j}}\}_{j \geq 1}$ such that $|q_{t_{n_{j}}} - q_{t}| > \epsilon_{0}$ for all $j \geq 1$. A further subsequence of $\{t_{n_{j}}\}_{j \geq 1}$ will converge to some $b_{t} \in [-L,L]$, and hence, $|b_{t} - q_{t}| \geq \epsilon_{0}$. Along that latter subsequence, we have $g(t_{n_{j}},q_{t_{n_{j}}})$ converging to $g(t,b_{t})$. This and (\ref{eq6.2.1}) together imply that $g(t,b_{t}) = g(t,q_{t})$. So, by strict monotonicity of $g(t,.)$, we get that $b_{t} = q_{t}$, which yields a contradiction. Hence, $q_{t_{n}} \rightarrow q_{t}$ as $n \rightarrow \infty$, which implies that ${\bf q} = \{q_{t}\}_{t \in [0,1]} \in C_{y_{0}}[0,1]$. This proves the convexity of $H_{0}$. 
\end{proof}
\vspace{-0.1in}
\begin{lemma} \label{lem6.3}
Every point in $H_{0}$ is a support point of the distribution of $\widetilde{{\bf X}}$ in $L_{2}[0,1]$. Here $\widetilde{{\bf X}}$ is as in Theorem \ref{thm4.2}.
\end{lemma}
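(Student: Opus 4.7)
The plan is to take a point ${\bf x}_0 = {\bf G}({\bf f}_0) \in H_0$ with ${\bf f}_0 \in C_{y_0}[0,1]$, fix $\epsilon > 0$, and show directly that $P(\|\widetilde{{\bf X}} - {\bf x}_0\|_{L_2} < \epsilon) > 0$. Since the $L_2[0,1]$ norm is dominated by the sup norm on $C[0,1]$ (via $\int_0^1 h_t^2\,dt \leq \|{\bf h}\|_\infty^2$), it suffices to produce a positive lower bound on $P(\|\widetilde{{\bf X}} - {\bf x}_0\|_\infty < \epsilon)$, where the sup-norm ball is measured in $C[0,1]$.

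The key step is to pull the problem back through ${\bf G}$ onto the underlying fractional Brownian motion. Writing $\widetilde{{\bf X}} = {\bf G}(\widetilde{{\bf Y}})$ with $\widetilde{{\bf Y}}$ an independent copy of the driving fractional Brownian motion starting at $y_0$, I would invoke the continuity of ${\bf G} : C[0,1] \to C[0,1]$ established in Lemma \ref{lem6.1}: given $\epsilon > 0$, pick $\eta > 0$ such that $\|{\bf f} - {\bf f}_0\|_\infty < \eta$ forces $\|{\bf G}({\bf f}) - {\bf G}({\bf f}_0)\|_\infty < \epsilon$. Then
\begin{equation*}
P(\|\widetilde{{\bf X}} - {\bf x}_0\|_{L_2} < \epsilon) \geq P(\|\widetilde{{\bf X}} - {\bf x}_0\|_\infty < \epsilon) \geq P(\|\widetilde{{\bf Y}} - {\bf f}_0\|_\infty < \eta).
\end{equation*}
The right-hand side is strictly positive because, as recalled in the proof of Theorem \ref{thm3.3}(b) (via Proposition 5.1 of \cite{Guas06}), the support of a fractional Brownian motion starting at $y_0$ is all of $C_{y_0}[0,1]$, and ${\bf f}_0$ lies in $C_{y_0}[0,1]$.

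There is really no hard step here: the only subtle point is the interplay between the two topologies, namely that even though the support statement for $\widetilde{{\bf Y}}$ is phrased in the sup norm and the conclusion concerns the $L_2$ norm, the inequality $\|\cdot\|_{L_2} \leq \|\cdot\|_\infty$ on $C[0,1]$ makes the sup-norm statement strictly stronger, so continuity of ${\bf G}$ in the sup norm (which is what Lemma \ref{lem6.1} provides) is exactly the right tool. If instead $\widetilde{{\bf X}}$ were only in $L_2$, one would have to work harder to relate $L_2$-closeness of $\widetilde{{\bf Y}}$ to $L_2$-closeness of ${\bf G}(\widetilde{{\bf Y}})$; but since $\widetilde{{\bf X}}$ has continuous sample paths almost surely by Lemma \ref{lem6.1} applied pathwise, this difficulty does not arise.
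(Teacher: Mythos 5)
Your argument is correct and is essentially identical to the paper's own proof: both dominate the $L_{2}$ ball by the sup-norm ball, use the continuity of ${\bf G}$ from Lemma \ref{lem6.1} to pull the sup-norm ball back to a sup-norm ball around ${\bf f}_{0}$, and then invoke the full-support property of the fractional Brownian motion in $C_{y_{0}}[0,1]$. No differences worth noting.
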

\vspace{-0.1in}
\begin{proof}
Fix ${\bf f} \in C_{y_{0}}[0,1]$ and $\eta > 0$. Let $||.||$ denote the supremum norm on $C[0,1]$ as before, and $||.||_{2}$ denote the usual norm on $L_{2}[0,1]$. Since $||{\bf y}||_{2} \leq ||{\bf y}||$ for any ${\bf y} \in C[0,1]$,  we have $P(||{\bf G}({\bf Y}) - {\bf G}({\bf f})||_{2} < \eta) > P(||{\bf G}({\bf Y}) - {\bf G}({\bf f})|| < \eta)$. By the continuity of ${\bf G}$ proved in Lemma \ref{lem6.1}, there exists $\delta > 0$ depending on $\eta$ and ${\bf f}$ such that $P(||{\bf G}({\bf Y}) - {\bf G}({\bf f})|| < \eta) > P(||{\bf Y} - {\bf f}|| < \delta)$. Since any element in $C_{y_{0}}[0,1]$ is a support point of the distribution of ${\bf Y}$ in $C[0,1]$, we have $P(||{\bf Y} - {\bf f}|| < \delta) > 0$. It now follows that ${\bf G}({\bf f}) \in H_{0}$ is a support point of the distribution of $\widetilde{{\bf X}} = {\bf G}(\widetilde{{\bf Y}})$ in $L_{2}[0,1]$, where $\widetilde{{\bf Y}}$ denotes an independent copy of ${\bf Y}$. This completes the proof.  
\end{proof} 
\vspace{-0.1in}
\begin{proof}{\it of Theorem \ref{thm4.2}} \ \
We will first show that $SD({\bf x})$ takes all values in $(0,1)$ as ${\bf x}$ varies in $C[0,1]$. As discussed in Section \ref{sec:3}, the spatial depth function is continuous on $L_{2}[0,1]$. We have $H_{0} \subseteq C[0,1] \subseteq L_{2}[0,1]$, and $H_{0}$ is convex by Lemma \ref{lem6.2}, which implies that the set $SD(H_{0}) = \{SD({\bf f}) : {\bf f} \in H_{0}\}$ is  an interval in $[0,1]$. It follows from the non-atomicity of ${\bf X}$ and Lemma 4.14 in \cite{Kemp87} that $SD({\bf m}) = 1$, where ${\bf m}$ is a spatial median of ${\bf X}$ in $L_{2}[0,1]$. Further, from Remark $4.20$ in \cite{Kemp87}, it follows that ${\bf m}$ lies in the closure of $H_{0}$ in $L_{2}[0,1]$. Thus, there exists a sequence $\{{\bf m}_{n}\}_{n \geq 1}$ in $H_{0} \subseteq C[0,1]$ such that $||{\bf m}_{n} - {\bf m}||_{2} \rightarrow 0$ as $n \rightarrow \infty$, where $||.||_{2}$ is the usual norm in $L_{2}[0,1]$ as before. Hence, by continuity of the spatial depth function,, we have $SD({\bf m}_{n}) \rightarrow 1$ as $n \rightarrow \infty$. We next consider the sequence of linear functions $\{{\bf r}_{n}\}_{n \geq 1}$, where ${\bf r}_{n} = \{g(0,y_{0}) + d_{n}t\}_{t \in [0,1]}$ and $d_{n} \rightarrow \infty$ as $n \rightarrow \infty$. Since $g(t.,)$ is a strictly increasing continuous function for each $t \in [0,1]$, there exists $f_{n,t}$ such that $g(t,f_{n,t}) = g(0,y_{0}) + d_{n}t$. Using the assumptions about $g$, it can be shown that for each $n \geq 1$, the function ${\bf f}_{n} = \{f_{n,t}\}_{t \in [0,1]} \in C_{y_{0}}[0,1]$, which implies that ${\bf r}_{n} = {\bf G}({\bf f}_{n}) \in H_{0}$. Now, using dominated convergence theorem, we have $SD({\bf r}_{n}) \rightarrow 0$ as $n \rightarrow \infty$ in view of the fact that $d_{n} \rightarrow \infty$, and ${\bf r}_{n}/d_{n}$ converges to the identity function $\{t\}_{t \in [0,1]} \in C[0,1]$ as $n \rightarrow \infty$. Hence, $SD(H_{0}) \supseteq (0,1)$. Note that we will have $SD(H_{0}) = (0,1]$ if the spatial median ${\bf m}$ actually lies in $H_{0}$. Using Lemma \ref{lem6.3}, and the continuity of SD along with the fact that $SD(H_{0}) \supseteq (0,1)$, we get that the support of the distribution of $SD(\widetilde{{\bf X}})$ is the whole of $[0,1]$.    
\end{proof}


\begin{thebibliography}{12}
\bibitem[Araujo and Gin{\'e}(1980)]{AG80} Araujo, A. and Gin{\'e}, E. (1980) {\em The central limit theorem for real and Banach valued random variables}. New York-Chichester-Brisbane: John Wiley \& Sons.
\bibitem[Cuesta-Albertos and Nieto-Reyes(2008)]{CANR08} Cuesta-Albertos, J. A., Nieto-Reyes, A. (2008). The random Tukey depth. {\em Computational Statistics \& Data Analysis, 52}, 4979--4988.
\bibitem[Cuevas et al.(2007)]{CFF07} Cuevas, A., Febrero, M., Fraiman, R. (2007). Robust estimation and classification for functional data via projection-based depth notions. {\em Computational Statistics, 22}, 481--496.
\bibitem[Cuevas and Fraiman(2009)]{CF09} Cuevas, A. and Fraiman, R. (2009). On depth measures and dual statistics. A methodology for dealing with general data. {\em Journal of Multivariate Analysis, 100}, 753--766.
\bibitem[Donoho and Gasko(1992)]{DG92} Donoho, D. L. and Gasko, M. (1992). Breakdown properties of location estimates based on halfspace depth and projected outlyingness. {\em The Annals of Statistics, 20}, 1803--1827.
\bibitem[Dutta and Ghosh(2012)]{DG12} Dutta, S. and Ghosh, A. K. (2012) On robust classification using projection depth. {\em Annals of the Institute of Statistical Mathematics, 64}, 657--676.
\bibitem[Dutta et al.(2011)]{DGC11} Dutta, S. , Ghosh, A. K. and Chaudhuri, P. (2011). Some intriguing properties of Tukey's half-space depth. {\em Bernoulli, 17}, 1420--1434.
\bibitem[Fraiman and Muniz(2001)]{FM01} Fraiman, R. and Muniz, G. (2001). Trimmed means for functional data. {\em Test, 10}, 419--440.
\bibitem[Ghosh and Chaudhuri(2005)]{GC05a} Ghosh, A. K. and Chaudhuri, P. (2005). On maximum depth and related classifiers. {\em Scandinavian Journal of Statistics, 32}, 327--350.
\bibitem[Guasoni(2006)]{Guas06} Guasoni, P. (2006). No arbitrage under transaction costs, with fractional Brownian motion and beyond. {\em Mathematical Finance, 16}, 569--582.
\bibitem[Hansen(1991)]{Hans91} Hansen, B. E. (1991) Strong laws for dependent heterogeneous processes. {\em Econometric Theory, 7}, 213--221.
\bibitem[J{\"o}rnsten(2004)]{Jorn04} J{\"o}rnsten, R. (2004) Clustering and classification based on the $L_{1}$ data depth. {\em Journal of Multivariate Analysis, 90}, 67--89.
\bibitem[Kemperman(1987)]{Kemp87} Kemperman, J. H. B. (1987) The median of a finite measure on a Banach space. In {\em Statistical data analysis based on the $L_1$-norm and related methods ({N}euch\^atel, 1987)} (pp. 217--230). Amsterdam: North-Holland.
\bibitem[Li, Cuesta-Albertos and Liu(2012)]{LCAL12} Li, J., Cuesta-Albertos, J. A. and Liu, R. (2012) DD-classifier: Nonparametric classification procedure based on DD-plot. {\em Journal of the American Statistical Association, 107}, 737--753.
\bibitem[Liu(1990)]{Liu90} Liu, R. (1990) On a notion of data depth based on random simplices. {\em The Annals of Statistics, 18}, 405--414.
\bibitem[Liu et al.(1999)]{LPS99} Liu, R. Y. , Parelius, J. M. and Singh, K. (1999) Multivariate analysis by data depth: descriptive statistics, graphics and inference. {\em The Annals of Statistics, 27}, 783--858.
\bibitem[L{\'o}pez-Pintado and Romo(2006)]{LPR06} L{\'o}pez-Pintado, S. and Romo, J. (2006) Depth-based classification for functional data. {\em Data depth: robust multivariate analysis, computational geometry and applications, DIMACS Ser. Discrete Math. Theoret. Comput. Sci., 72}, 103--119.
\bibitem[L{\'o}pez-Pintado and Romo(2009)]{LPR09} L{\'o}pez-Pintado, S. and Romo, J. (2009) On the concept of depth for functional data. {\em Journal of the American Statistical Association, 104}, 718--734.
\bibitem[L{\'o}pez-Pintado and Romo(2011)]{LPR11} L{\'o}pez-Pintado, S. and Romo, J. (2011) A half-region depth for functional data. {\em Computational Statistics and Data Analysis, 55}, 1679--1695.
\bibitem[Mosler(2002)]{Mosl02} Mosler, K. (2002) {\em Multivariate dispersion, central regions and depth}. Berlin: Springer-Verlag.
\bibitem[Mosler and Hoberg(2006)]{MH06} Mosler, K. and Hoberg, R. (2006) Data analysis and classification with the zonoid depth. {\em Data depth: robust multivariate analysis, computational geometry and applications, DIMACS Ser. Discrete Math. Theoret. Comput. Sci., 72}, 49--59.
\bibitem[Mosler and Polyakova(2012)]{MP12} Mosler, K. and Polyakova, Y. (2012) General notions of depth for functional data. Technical Report. arXiv:1208.1981v1. 
\bibitem[Ramsay and Silverman(2005)]{RS05} Ramsay, J. O. and Silverman, B. W. (2005) {\em Functional data analysis}. New York: Springer.
\bibitem[Revuz and Yor(1991)]{RY91} Revuz, D. and Yor, M. (1991) {\em Continuous martingales and Brownian motion}. Berlin: Springer-Verlag.
\bibitem[Serfling(2002)]{Serf02} Serfling, R. (2002) A depth function and a scale curve based on spatial quantiles. In {\em Statistical data analysis based on the $L_1$-norm and related methods ({N}euch\^atel, 2002)} (pp. 25--38). Basel: Birkh\"auser.
\bibitem[Serfling(2006)]{Serf06} Serfling, R. (2006) Depth functions in nonparametric multivariate inference. {\em Data depth: robust multivariate analysis, computational geometry and applications, DIMACS Ser. Discrete Math. Theoret. Comput. Sci., 72}, 1--16.
\bibitem[Vardi and Zhang(2000)]{VZ00} Vardi, Y. and Zhang, C-H. (2000) The multivariate $L_1$-median and associated data depth. {\em Proceedings of the National Academy of Sciences of the United States of America, 97}, 1423--1423.
\bibitem[Zuo(2006)]{Zuo06} Zuo, Y. (2006) Multidimensional trimming based on projection depth. {\em The Annals of Statistics, 34}, 2211--2251.
\bibitem[Zuo and Serfling(2000)]{ZS00a} Zuo, Y. and Serfling, R. (2000) General notions of statistical depth function. {\em The Annals of Statistics, 28}, 461--482.
\end{thebibliography}
\end{document}